\newcommand{\Problem}[1]{\textsc{#1}\xspace}
\newcommand{\weightset}{\ensuremath{\mathcal{U}}}
\newcommand{\pweights}{\ensuremath{\mathbf{w}}}
\newcommand{\XHS}{\Problem{X3HS}}
\newcommand{\XHSfull}{\Problem{Exact 3-Hitting Set}}
\newcommand{\FD}{\Problem{Flow Decomposition}}
\newcommand{\kFD}{\Problem{$k$-FD}}
\newcommand{\kFDfull}{\Problem{$k$-Flow Decomposition}}
\newcommand{\PPFD}{\Problem{$k$FWA}}
\newcommand{\PPFDfull}{\Problem{$k$-Flow Weight Assignment}}
\newcommand{\WPFDfull}{\Problem{$k$-Flow Routing}}
\newcommand{\WPUFD}{\Problem{\weightset-$k$FR}}
\newcommand{\WPUFDfull}{\Problem{\weightset-$k$-Flow Routing}}
\newcommand{\WPxFD}[1]{\Problem{$\{#1\}$-$k$FR}}
\newcommand{\WPxFDfull}[1]{\Problem{$\{#1\}$-$k$-Flow Routing}}
\newcommand{\UFD}{\Problem{\weightset-$k$FD}}
\newcommand{\UFDfull}{\Problem{\weightset-$k$-Flow Decomposition}}
\newcommand{\xFD}[1]{\Problem{$\{#1\}$-$k$FD}}
\newcommand{\probgfk}{($G,f,k$)\xspace}
\newcommand{\sdt}{$s$-$t$}
\newcommand{\stdag}{\sdt--DAG\xspace}
\newcommand{\stpath}{\sdt--path\xspace}
\newcommand{\stpaths}{\sdt--paths\xspace}
\newcommand{\TOcut}{top-cut\xspace}
\newcommand{\TOcuts}{top-cuts\xspace}
\newcommand{\maxf}{\ensuremath{\lambda}}
\newcommand{\toboggan}{\textsf{Toboggan}\xspace}
\newcommand{\catfish}{\textsf{Catfish}\xspace}
\newcommand{\zebra}{\texttt{zebrafish}\xspace}
\newcommand{\mouse}{\texttt{mouse}\xspace}
\newcommand{\human}{\texttt{human}\xspace}
\newcommand{\salmon}{\texttt{human-salmon}\xspace}
\newcommand{\preparaspace}{\vspace{0.07in}}
\theoremstyle{plain}
\newtheorem{conjecture}[theorem]{Conjecture}
\newtheorem{proposition}[theorem]{Proposition}
\def\N{\mathbb{N}} 
\def\Z{\mathbb{Z}}  
\DeclareRobustCommand{\stirling}{\genfrac\{\}{0pt}{}}
\newenvironment{tightcenter}
 {\parskip=0pt\par\nopagebreak\centering}
 {\par\noindent\ignorespacesafterend}
\newlength{\RoundedBoxWidth}
\newsavebox{\GrayRoundedBox}
\newenvironment{GrayBox}[1]%
   {\setlength{\RoundedBoxWidth}{\textwidth-4.5ex}
    \def\boxheading{#1}
    \begin{lrbox}{\GrayRoundedBox}
       \begin{minipage}{\RoundedBoxWidth}%
   }{%
       \end{minipage}
    \end{lrbox}%
    \begin{tightcenter}%
    \begin{tikzpicture}%
       \node(Text)[draw=black!20,fill=white,rounded corners,%
             inner sep=2ex,text width=\RoundedBoxWidth]%
             {\usebox{\GrayRoundedBox}};
        \coordinate(x) at (current bounding box.north west);
        \node [draw=white,rectangle,inner sep=3pt,anchor=north west,fill=white]
        at ($(x)+(10.5pt,.75em)$) {\boxheading};
    \end{tikzpicture}
    \end{tightcenter}\vspace{0pt}%
    \ignorespacesafterend
}
\newenvironment{problem}[2][]{\noindent\ignorespaces%
                                \FrameSep=6pt%
                                \parindent=0pt%
                \vspace*{-.5em}
                \ifthenelse{\isempty{#1}}{%
                  \begin{GrayBox}{\textsc{#2}}%
                }{%
                  \begin{GrayBox}{\textsc{#2} parametrised by~{#1}}%
                }
                \newcommand\Prob{Problem:}%
                \newcommand\Input{Input:}%
                \begin{tabular*}{\textwidth}{@{\hspace{.5em}} >{\itshape} p{1.2cm} p{0.82\textwidth} @{}}%
            }{
                \end{tabular*}%
                \end{GrayBox}%
                \vspace*{-.5em}
                \ignorespacesafterend
            }
\def\paragraph#1{\par\textbf{#1} \ignorespaces}
\def\Url@twoslashes{\mathchar`\/\@ifnextchar/{\kern-1pt}{}}
\g@addto@macro\UrlSpecials{\do\/{\Url@twoslashes}}
\title{A practical fpt algorithm for \FD and transcript assembly}
\titlerunning{A practical fpt algorithm for \FD and transcript assembly} 
\author[1]{Kyle Kloster}
\author[2]{Philipp Kuinke}
\author[1]{Michael P. O'Brien}
\author[1]{Felix Reidl}
\author[2]{Fernando S\'anchez Villaamil}
\author[1]{Blair~D.~Sullivan}
\author[1]{Andrew~van~der~Poel}
\affil[1]{North Carolina State University, USA\\
  \texttt{(kakloste|mpobrie3|blair\_sullivan|ajvande4)@ncsu.edu, felix.reidl@gmail.com}}
\affil[2]{RWTH Aachen University, Germany\\
  \texttt{(kuinke|fernando.sanchez)@cs.rwth-aachen.de}}
\authorrunning{\small K. Kloster, P. Kuinke, M.P. O'Brien, %
              F. Reidl, F. S\'anchez Villaamil,%
              B. D. Sullivan, A. van der Poel}
\subjclass{} 
\keywords{}
\begin{document}

\maketitle

\begin{abstract}
  The \FD problem, which asks for the smallest set of weighted paths that
``covers'' a flow on a DAG, has recently been used as an important computational
step in transcript assembly.
We prove the problem is in \FPT~when parameterized by the number of paths by giving a practical
linear fpt algorithm.
Further, we implement and engineer a \FD solver based on this algorithm,
and evaluate its performance on RNA-sequence data.
Crucially, our solver finds exact solutions
while achieving runtimes competitive with a state-of-the-art heuristic.
Finally, we contextualize our design choices with two hardness results
related to preprocessing and weight recovery.
Specifically, \kFDfull does not admit polynomial kernels under standard complexity assumptions,
and the related problem of assigning (known) weights to a given set of paths is \NP-hard.

\end{abstract}

\keywords{}

\section{Introduction}\label{sec:intro}


We study the problem \FD~\cite{HowToSplit, mumey2015parity, Catfish, greedy}
under the paradigm of parameterized complexity~\cite{downey2012parameterized}.
Motivated by the principles of algorithm engineering, we design and implement
an fpt algorithm that not only solves the problem exactly but also has a run-
time that is competitive with heuristics~\cite{Catfish, greedy}. Furthermore,
we characterize several key aspects of the problem's complexity.
Decomposing flows is the central algorithmic problem in a recent method for
analyzing high-throughput transcriptomic sequencing data, and we benchmark
our implementation on data from this use case~\cite{Catfish}.

\FD asks for the minimum number of weighted paths necessary to exactly cover a
flow on a directed acyclic graph with a unique source~$s$ and sink~$t$ (an
\stdag). More precisely, we say a set of \stpaths $P = p_1,\dots, p_k$ and
corresponding weights $\pweights = (w_1,\dots, w_k)$ are a \emph{flow
decomposition} of an \stdag $G$ with flow $f$ if
$$
  f(a) = \sum_{i=1}^k w_i\cdot\mathbf{1}_{p_i}(a)
$$
for every arc $a$ in $G$, where $\mathbf{1}_{p_i}(a)$ is the indicator function whose output is 1 if $a\in p_i$, and $0$ otherwise.
Specifically, in this paper, we are concerned with \kFDfull,
which uses the natural parameter of the number of paths.

\begin{problem}{$k$-Flow Decomposition (\kFD)}
    \Input & \probgfk with $G$ an \stdag, $f$ a flow on $G$, and $k$ a positive integer.\\
    \Prob  & Is there an integral \emph{flow decomposition} of $(G,f)$ using at most $k$ paths?
\end{problem}

\noindent
Shao and Kingsford
recently used flow decompositions to assemble unknown transcripts from
observed subsequences of RNA called exons~\cite{Catfish}. In this context, the
input graph is a \emph{splice graph}, which is constructed in one of two ways:
either by contracting induced paths in a de Bruijn graph (de novo assembly) or
by connecting exons inferred from an alignment process that co-occur in reads
(reference-based). In both constructions, every vertex in the graph corresponds to a (partial) exon being transcribed and every arc indicates that two exons appear in sequence in one of the reads; the arcs are weighted based on abundance (frequency) in the data.
If we affix a dummy source $s$ and a dummy sink $t$ to the vertices of in- and out-degree zero, respectively, the resulting labeling forms a flow\footnote{This assumes an idealized scenario with constant read coverage, but methods exist to rectify noisy data~\cite{tomescu2013novel}.} from $s$ to $t$.
As such, a transcript corresponds to a weighted \stpath and we aim to recover the collection
of transcripts by decomposing the flow into a minimal number of paths in the
above sense.

A precursory investigation of the data used by Shao and Kingsford~\cite{Catfish} to evaluate the performance of \FD heuristics in RNA transcript assembly left us with three guiding observations:
\vspace{-0.075in}
\begin{enumerate}
  \item 97\% of instances admit decompositions
        into fewer than 8 paths. Our algorithm should thus exploit
        the \textbf{small natural parameter}.
  \item The data set contains over 17 million mostly small instances. Our
        implementation should therefore be able to handle a \textbf{large throughput}.
  \item The flow decompositions computed by our implementation should
        reliably recover the \textbf{domain-specific solution} i.e.~the true transcripts.
\end{enumerate}
\vspace{-0.075in}
The first point corroborates the parameterized approach. Using dynamic
programming over a suitable graph decomposition, a common algorithm design
technique from parameterized complexity, we show the problem can be solved in
linear-fpt time (Section~\ref{sec:algo}):

\begin{theorem}\label{mainthm}
  There is an $2^{O(k^2)} (n + \maxf)$ algorithm for solving \kFDfull,
  where $\maxf$ is the logarithm of the largest flow value.
\end{theorem}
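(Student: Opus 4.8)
My plan is a bottom-up dynamic program along a topological order of~$G$, resting on the observation that every yes-instance of \kFD is structurally \emph{thin}. First I would fix a topological order $v_1 = s,\dots,v_n = t$ (possible because $s$ is the unique source and $t$ the unique sink) and, for $1 \le i \le n-1$, let $C_i$ be the \TOcut consisting of all arcs whose tail lies in $\{v_1,\dots,v_i\}$ and whose head does not. Every \stpath visits vertices in increasing topological order, so it leaves the set $\{v_1,\dots,v_i\}$ exactly once and hence crosses $C_i$ exactly once; after deleting in linear time every arc of flow value~$0$ (which no decomposition uses), every remaining arc carries positive flow, so a decomposition into at most $k$ paths forces $|C_i|\le k$ for all~$i$. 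The algorithm therefore begins by computing a topological order and the cuts $C_i$, rejecting immediately if some $|C_i|>k$; otherwise $m<kn$ and it can afford to process the whole instance.

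The dynamic program sweeps $v_1,\dots,v_n$ in order, maintaining after step~$i$ a set of \emph{partial solutions}. Each partial solution is a pair: (i)~a \emph{routing} $\pi_i\colon\{1,\dots,k\}\to C_i$ placing each of the $k$ path-slots on one frontier arc (thereby tracing out, so far, an \stpath $p_j$ for each slot $j$), and (ii)~a \emph{weight digest}, a canonical description of the affine subspace of $\Q^k$ cut out by the constraints $\sum_{j\,:\,p_j\ni a} w_j = f(a)$ ranging over every arc $a$ whose tail is among $v_1,\dots,v_i$. To advance from $v_i$ to $v_{i+1}$ I would branch over all ways of re-routing the slots currently sitting on the in-arcs of $v_{i+1}$ onto its out-arcs (at most $k^{O(k)}$ continuations, discarding any that leaves a positive-flow out-arc with no slot, which is never the case for a genuine decomposition), fold the resulting new constraints into the digest, and prune whenever the subspace becomes empty. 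A partial solution that reaches $v_n = t$ with a subspace containing a non-negative integral point yields a flow decomposition into at most $k$ paths (its positive coordinates are the path weights), and conversely every such decomposition arises this way once padded with zero-weight slots; so the algorithm accepts iff some partial solution survives. Correctness is underpinned by flow conservation at each internal vertex, which makes the per-arc constraints collectively equivalent to the global identity $f(a)=\sum_i w_i\,\mathbf 1_{p_i}(a)$.

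For the running time: at each of the $O(n)$ steps the number of stored partial solutions is $k^{O(k)}$ (routings) times $2^{O(k^2)}$ (digests); each step inspects $k^{O(k)}$ continuations per stored solution; each arithmetic operation on the $\maxf$-bit flow values costs $O(\maxf)$; and the single final non-negative-integrality test is a bounded-variable integer-feasibility check. With the usual care to update digests incrementally rather than recompute them from scratch, this sums to $2^{O(k^2)}(n+\maxf)$, preprocessing included.

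The hard part will be the weight digest. I need to argue that the slot weights --- \emph{a priori} rational and under-determined --- never have to be enumerated, that the affine constraint subspace admits a canonical encoding with only $2^{O(k^2)}$ possibilities (the flow values of the fixed instance pin down the right-hand sides, so a digest is essentially the record of which at most~$k$ slot-subsets have had their partial sum constrained, of which there are $2^{O(k^2)}$ up to linear combination), and that non-negative integrality can be decided directly from this encoding. A lesser point to nail down is the thinness reduction together with the fact that padding a decomposition with zero-weight slots does not change the minimum number of paths.
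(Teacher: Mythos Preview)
Your approach is essentially the paper's: dynamic programming over \TOcuts, storing (routing, linear-system) pairs, with the final integrality check handled as an ILP in $k$ variables. The routing count $k^{O(k)}$, the reduction to at most $k$ independent constraints, and the overall $2^{O(k^2)}(n+\maxf)$ accounting all match.

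The one place you flag as hard---the $2^{O(k^2)}$ bound on digests---has a small but real gap. You correctly note that the coefficient side of each surviving constraint is one of $2^k$ slot-subsets, giving at most $\binom{2^k}{k}$ coefficient matrices. But ``the flow values of the fixed instance pin down the right-hand sides'' does not by itself bound the number of \emph{distinct} right-hand sides: a priori the instance may carry $\Omega(kn)$ different flow values, and then your digest count depends on~$n$. The missing observation (which the paper makes explicit) is that in any yes-instance every flow value is the sum of some subset of the $k$ path weights, so at most $2^k$ distinct flow values can occur; check this up front and reject otherwise. With that in hand, each (binary row, right-hand side) pair lives in a set of size $2^k\cdot 2^k = 4^k$, and the number of digests is at most $\binom{4^k}{k} = 2^{O(k^2)}$. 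A related caution: if your ``canonical encoding'' is full RREF over~$\Q$, the reduced entries and right-hand sides are no longer drawn from a bounded set, so store instead a maximal linearly independent subset of the \emph{original} binary-row constraints.
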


\noindent
To address the second point, we implement a \FD solver,
\toboggan~\cite{toboggan}, based on this algorithm and compare it with the
state-of-the-art heuristic \catfish (Section~\ref{sec:experiments}). Our
results show that \toboggan's running time is comparable to \catfish and thus
suitable for high-throughput applications. With respect to the third point,
using \FD for assembly implicitly assumes that the true transcripts
correspond to paths in a minimum-size flow decomposition.  Prior work focuses
on heuristics which cannot be used to evaluate the validity of this assumption.
With \toboggan, we validate that minimum-size flow decompositions accurately
recover transcripts in most instances (Section~\ref{sec:exp:validation}).

\toboggan incorporates several heuristic improvements
(discussed in Section~\ref{sec:implement}) to keep the running time and memory consumption
of the core dynamic programming routine as small as possible.
In particular, the simple preprocessing we employ is highly successful in solving many instances directly.
A provable guarantee for this preprocessing in the form of a small kernel, however, is unlikely: we
show that unless $\NP$ is contained in $\coNP / \emph{poly}$,
\kFD does not admit a polynomial kernel (Section~\ref{sec:no-poly-kernel}).
Further, our experimental evaluation hints at the following conjecture: for a given
decomposition of an \stdag into a minimum number of paths, there is a unique assignment of
weights to those paths to achieve a flow-decomposition.
If the conjecture holds, it implies a tighter bound on the running time of the algorithm in Theorem~\ref{mainthm}.
Despite this, we prove the more general `weight recovery' problem is \NP-hard given a decomposition into an arbitrary number of paths (Section~\ref{sec:exact3}).

\section{Preliminaries}
\label{sec:prelim}
\paragraph{Related work.}
\FD is known to be \NP-complete, even when all flow values are in $\{1,2,4\}$, and does
not admit a PTAS~\cite{HowToSplit,greedy}. The best known approximation algorithm for the problem is
based on parity-balancing path flows~\cite{mumey2015parity}, and guarantees an
approximation ratio of $\maxf L^{\maxf}$ with a running-time of
$O(\maxf|V|\cdot|E|^2)$, where $L$ is the length of the longest \stpath, and
$\maxf$ is the logarithm of the maximum flow value.
The variant in which the decomposition approximates the original flow values to
minimize a specified penalty function was studied in~\cite{tomescu-etal}, and
the authors give an fpt algorithm and FPTAS parameterized by $k$, the largest \sdt-cut, and the maximum flow value.

Due to its practical use for sequencing data, much of the prior work on \FD
has focused on fast heuristics, including two greedy algorithms which
iteratively add the remaining path which is shortest (\textsf{greedy-length})
or of maximum possible weight (\textsf{greedy-width})~\cite{greedy}.
Both heuristics produce
solutions with at most $k = |E| - |V| + 2$ paths and have variants~\cite{HowToSplit} which decompose all but an
$\varepsilon$-fraction of the flow using at most $O(1/\varepsilon^2)$ times
the minimum number of paths, for any $\varepsilon > 0$.
Historically, \textsf{greedy-width} has provided the best performance~\cite{HowToSplit,
mumey2015parity, greedy}, but the recent heuristic \catfish~\cite{Catfish}
showed significant improvements over \textsf{greedy-width} in accuracy and runtime.
\preparaspace
\paragraph{Notation.}  Given a directed acylic graph (DAG), $G = (V,A)$, we say $G$ is an
\stdag if $G$ has a single source, $s$, and a single sink, $t$.
We denote by $A^+(v)$ the set of out-arcs of a node $v$,
and by $A^-(v)$ the in-arcs. For a set of nodes, $S$,
we define $A^+(S) = \{vu \mid v\in S, u\notin S, vu \in A^+(v)\}$.
If $f$ is a flow on $G$, we write $f(a)$ for the flow value on
an arc $a$ and $F$ for the total flow (the sum of flow values on the arcs in $A^+(s)$).
\preparaspace
\paragraph{Terminology.}
Given a DAG, a \textit{topological ordering} on the nodes is a labeling such that every arc is directed from a node with a smaller label to one with a larger label.
We label the nodes of an \stdag~$G$ as $v_1,\dots, v_n$ corresponding to an arbitrary, fixed topological ordering of~$G$; this implies that $s = v_1$ and $t = v_n$.
We further define the sets~$S_i = \{v_j\mid j\leq i\}$ based on the ordering. We refer to the \sdt--cuts $A^+(S_i)$ as \emph{topological ordering cuts} (\TOcuts).
Our algorithm for computing a flow decomposition will ultimately depend on tracking how the paths cross \TOcuts, a notion we refer to as a \emph{routing}.


\begin{definition}[Routings and extensions]\label{def:extension}
  A surjective function $g \colon [k] \to A^+(S_i)$ is a \emph{routing out of $S_i$}.
  A routing $g' \colon [k] \to A^+(S_i)$ is an \emph{extension} of $g \colon [k] \to A^+(S_{i-1})$ if for each $j\in [k]$,
  $g'(j) = xy$ implies $g(j) = xy$ if $x\neq v_i$, and
  $g(j) = zx$ for some $z\in S_{i-1}$ if $x = v_i$.
\end{definition}

\noindent
In other words, an extension of a routing takes all the integers that map to
in-arcs of $v_i$ and maps them instead to the out-arcs of $v_i$ (because it is
surjective) while leaving the rest of the mapping unchanged. It will be important in our analysis
to differentiate arcs that occur in multiple paths from arcs that appear on
only a single path: we say paths $p$ and $p'$ \emph{coincide} on arc $a$ if
$a\in p$ and $a\in p'$. An arc $a$ is \emph{saturated} by a path~$p$ if~$p$ is
the only path for which $a\in p$.
\preparaspace
\paragraph{Parameterized complexity.}
A parameterized decision problem~$\Pi \in \Sigma^* \times \mathbb N$ is
\emph{fixed parameter tractable} if there exists an algorithm that decides it
in time~$g(k) \cdot n^c$ for some computable function~$g$ and a constant~$c$.
When~$c=1$, we call such an algorithm \emph{linear fpt}.
A \emph{polynomial kernel} is an algorithm that transforms, in polynomial time,
an instance~$(I,k) \in \Sigma^* \times \mathbb N$ of~$\Pi$ into an
equivalent instance~$(I',k')$ with~$|I'|,k' \leq k^{O(1)}$, meaning
that~$(I,k) \in \Pi$ if and only if~$(I',k') \in \Pi$.
Some more advanced machinery pertaining to kernelization lower bounds will be
defined in Section~\ref{sec:no-poly-kernel}.




\section{A linear fpt algorithm for \kFD}
\label{sec:algo}

\noindent
We solve \kFD via dynamic programming over the topological ordering:
we enumerate all ways of routing $k$ \stpaths over each \TOcut~$A^+(S_i)$.
Each such routing determines a set of constraints for the path weights,
which we encode in linear systems. For example, if
paths~$p_1$ and $p_2$ are routed over an arc~$a$, we add the constraint
$w_1 + w_2 = f(a)$ to our system.

Concretely, we keep a table $T_i$ for~$0 \leq i \leq n$ whose entries
are sets of pairs $(g,L)$, where $g\colon [k] \to A^+(S_i)$ is a
routing of the paths out of $S_i$ and $L$ is a system of linear equations that
encodes the known path weight constraints.
In particular, for every arc $a \in A^+(S_i)$, we write
$g^{-1}(a)$ for the set of \stpaths that are routed over $a$.
Each system of linear equations, $L$, is of the form $\mathbf{Aw} =
\mathbf{b}$, where $\mathbf{A}$ is a binary matrix with~$k$ columns,
$\mathbf{w}$ is the solution weight vector, and~$\mathbf{b}$ is a vector
containing values of~$f$. Each row $r$ of $\mathbf{A}$ corresponds to
an arc $a_r$ and encodes the constraint that
the weights of paths routed over~$a_r$ sum up to~$f(a_r)$.
The $j$th entry of row $r$ is 1 if and only if
path~$p_j$ was routed over $a_r$, and then~$f(a_r)$ equals the $r$th
entry of $\mathbf{b}$. Figure~\ref{fig:table_entry} illustrates an example of an entry~$(g,L)$.

\begin{figure}
    \centering
    \includegraphics[scale=.85]{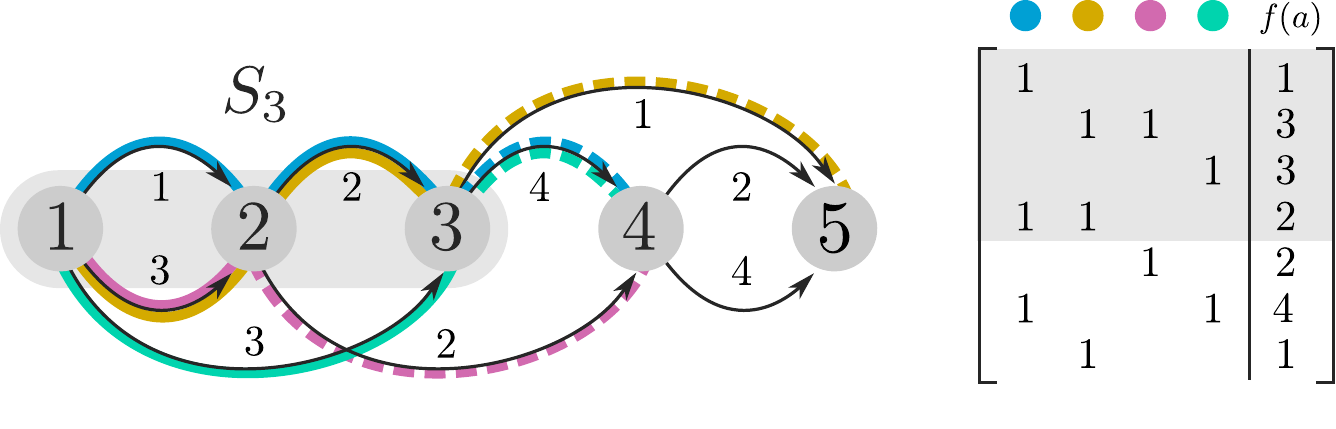}
    \caption{
        An entry of the table $T_3$. The routing~$g$ out of~$S_3$ (dashed lines) is an extension of the previous routings (solid paths).
        Each row in the constraint system~$L$ on the right corresponds to an arc; those shaded in gray are from arcs inside~$S_3$, and those in white
        come from $g$.
    }
    \label{fig:table_entry}
\end{figure}

We now describe how the dynamic programming tables
are constructed. For ease of
description, we augment $G$ to have two additional dummy arcs $a_s$ and $a_t$,
where $a_s$ is an in-arc of $s$ and $a_t$ is an out-arc of $t$ and $f(a_s) =
f(a_t) = F$. We begin with table $T_0$ that has a single entry. The routing of
this entry routes all paths over $a_s$ and the linear system has a single row
that constrains all path weights to sum to the total flow value.

For $i> 0$, we construct the table $T_i$ from $T_{i-1}$. Conceptually, we
``visit'' node $v_i$ and ``push'' all paths routed over its
in-arcs to be routed over its out-arcs.
Formally, we require the routing out of $S_i$ to be an
extension (Definition~\ref{def:extension}) of a routing out of $S_{i-1}$.
For each entry $(g,L) \in T_{i-1}$ we compute an entry $(g',L')$ of $T_i$ for
each extension $g'$ of $g$. For a given $g'$, we create $L'$ by adding
a row to $L$ for each arc $a \in A^+(v_i)$, encoding the constraint
$$
  \sum_{i \in g'^{-1}(a)} w_i = f(a).
$$
At the conclusion of the dynamic programming, all entries of the final table
$T_n$ will have the same routing, since $v_n=t$ has one (dummy) out-arc $a_t$.
This table allows us to decide whether there is a solution; $(G, f)$ is a yes-instance of \kFD if and only if some $L\in T_n$ has a solution
$\mathbf w$ whose entries are positive integers.
Pseudocode for our algorithm can be found in
Algorithm~\ref{alg:main} in Appendix~\ref{sec:pseudo}.

\begin{lemma}\label{lem:correctness}
  An integer vector $\pweights$ is a solution to $L \in T_n$ if and only
  if there is a set of \stpaths $P$ such that $(P, \pweights)$ is a flow
  decomposition of $(G, f)$.
\end{lemma}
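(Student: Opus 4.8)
The plan is to establish a tight correspondence between \emph{runs} of the dynamic program --- coherent sequences of table entries $(g_0,L_0)\in T_0,\dots,(g_n,L_n)\in T_n$ in which each entry is produced from its predecessor by the construction above --- and $k$-tuples of \stpaths of $G$, in such a way that the linear system carried to $T_n$ along a run is precisely the system of constraints that those paths impose. It is convenient to normalise the instance first: we may assume $f(a)>0$ on every arc of $G$ (zero-flow arcs, together with any flowless components they leave behind, can be removed in a preprocessing pass without changing the set of flow decompositions), and this is exactly what reconciles the surjectivity demanded of routings in Definition~\ref{def:extension} with the paths of a genuine flow decomposition.

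\textbf{Runs versus path tuples.} The core step is to show that runs are in bijection with $k$-tuples $(p_1,\dots,p_k)$ of \stpaths whose union is all of $A$. Given a run, fix $j\in[k]$ and read off the arcs $g_0(j),g_1(j),\dots,g_n(j)$: whenever two consecutive values differ, the extension rule forces the head of the earlier arc to equal the tail of the later one, so after collapsing repetitions we obtain a directed \sdt walk, which by acyclicity of $G$ is an \stpath $p_j$ (discarding the dummy arcs $a_s,a_t$). Conversely, given a $k$-tuple $(p_1,\dots,p_k)$ covering $A$, define $g_i(j)$ to be the unique arc of $p_j$ lying in the cut $A^+(S_i)$ (a dummy arc for $i\in\{0,n\}$); a direct check against the definitions of $A^+(S_i)$ and the ``visit $v_i$'' step shows that $g_i$ is an extension of $g_{i-1}$, and surjectivity of each $g_i$ holds because the tuple covers every arc, hence every arc of every \TOcut. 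These two maps are mutually inverse.

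\textbf{The linear systems.} A short induction on $i$ shows that, under this correspondence, $L_i$ is exactly the system consisting of the initial row $\sum_{j\in[k]}w_j=F$ together with, for every arc $a$ that is an out-arc of some $v_\ell$ with $\ell\le i$, the row $\sum_{j\,:\,a\in p_j}w_j=f(a)$. The base case $i=0$ is the single row coming from $a_s$; the inductive step is verbatim the rule that adds, for each $a\in A^+(v_i)$, the row $\sum_{j\in g_i^{-1}(a)}w_j=f(a)$, using $g_i^{-1}(a)=\{\,j:a\in p_j\,\}$. Since every arc of $G$ is the out-arc of a unique vertex, at $i=n$ the system $L_n$ records precisely the constraints $f(a)=\sum_{j}w_j\,\mathbf{1}_{p_j}(a)$ over all arcs $a$ of $G$ (the rows contributed by $a_s$ and $a_t$ both read $\sum_j w_j=F$ and are implied by the remaining rows via flow conservation).

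Combining the two steps yields the lemma. If an integer vector $\pweights$ solves some $L\in T_n$, pick a run terminating in an entry whose system is $L$, let $P=\{p_1,\dots,p_k\}$ be the associated paths, and observe that the rows of $L$ assert exactly $f(a)=\sum_j w_j\,\mathbf{1}_{p_j}(a)$ for every arc $a$, so $(P,\pweights)$ is a flow decomposition of $(G,f)$. Conversely, if $(P,\pweights)$ is a flow decomposition with $P=p_1,\dots,p_k$, then $(p_1,\dots,p_k)$ covers $A$ (every arc has positive flow, hence lies on some $p_j$), so it is realised by a run, and $\pweights$ satisfies every row of that run's terminal system $L_n\in T_n$. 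I expect the main obstacle to be the run-to-tuple bijection of the second step --- in particular, checking that the walk extracted from a run really is an \stpath and that the routings associated with a flow decomposition are all surjective, which is exactly where acyclicity and the positive-flow normalisation are used; integrality of $\pweights$ plays no role in the argument beyond matching the hypothesis of the statement.
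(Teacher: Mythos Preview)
Your proposal is correct and follows essentially the same approach as the paper: both directions hinge on the correspondence between a chain of routings (what you call a \emph{run}, what the paper reconstructs via backpointers) and a $k$-tuple of \stpaths, together with the observation that the accumulated linear system records exactly the arc-by-arc flow constraints. Your treatment is somewhat more careful than the paper's --- in particular, you make explicit the positive-flow normalisation needed for surjectivity in the reverse direction and you phrase the content of $L_i$ as an induction --- but the underlying argument is the same.
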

\begin{proof}
    We first prove the forward direction.
    If we keep backpointers in our DP tables i.e.\ pointers from entry $(g_i, L_i)$ to the entry in the previous table $(g_{i-1},L_{i-1})$ for which $g_{i}$ was the extension of $g_{i-1}$, we can obtain a sequence of routings $g_{n-1},\dots, g_1$ that correspond to backwards traversal of the backpointers.
    Let $h(j) = g_1(j), \dots, g_{n-1}(j)$.
    By Definition~\ref{def:extension}, removing the duplicate elements that appear in consecutive order from $h(j)$ yields a series of arcs that form an \stpath $p_j$.
    Because each routing requires every arc to have a path routed over it, the system $L$ contains constraints corresponding to each arc.
    Thus an integer solution $\pweights$ to $L$ corresponds to a weighting of $P = p_1,\dots, p_k$ such $(P, \pweights)$ is a flow decomposition.

    In the reverse direction, we first observe that the incidence of paths in $P$ on a \TOcut $A^+(S_i)$ corresponds to a routing out of $S_i$.
    Let $g_i$ be the corresponding routing out of $S_i$.
    For each node $v_i$, the paths routed over $A^-(v_i)$ must immediately be routed over an arc in $A^+(v_i)$, meaning $g_i$ is an extension of $g_{i-1}$.
    Because $(P, \pweights)$ is a flow decomposition, the paths routed over each arc will have weights summing to the flow value on that arc.
    Thus, any constraint in a linear system associated with the routings $g_i$ will have $\pweights$ as a solution.
\end{proof}

\noindent
To analyze the running time we now derive bounds on the dynamic programming table sizes.
First, we bound the number of possible routings in Lemma~\ref{lem:num_routings},
then we give an upper bound on the number of linear systems in Lemma~\ref{lem:linear-systems}.

\begin{lemma}\label{lem:num_routings}
 There are at most $\sqrt k \, \left(0.649 k\right)^k$ routings of $k$ paths over any \TOcut.
\end{lemma}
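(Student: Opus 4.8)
The plan is to recognize that a routing $g\colon [k]\to A^+(S_i)$ is, up to relabeling the codomain, a surjection from a $k$-element set onto a set of size $m := |A^+(S_i)|$, and that $m\le k$ by surjectivity. So the total count is $\sum_{m=1}^{k} (\text{number of surjections from }[k]\text{ onto }[m])$ — but since the arcs of the cut are distinguishable, this is exactly $\sum_{m=1}^k \binom{?}{}$... more precisely the number of routings onto a \emph{fixed} cut of size $m$ is the number of surjections $[k]\twoheadrightarrow[m]$, which equals $m!\,\stirling{k}{m}$ where $\stirling{k}{m}$ is the Stirling number of the second kind. Hence the quantity to bound is $\sum_{m=1}^{k} m!\,\stirling{k}{m}$, which is the $k$th \emph{ordered Bell number} (Fubini number), call it $a_k$. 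The lemma is then the purely analytic claim $a_k \le \sqrt{k}\,(0.649\,k)^k$.

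First I would make precise that the number of routings over the particular cut $A^+(S_i)$ is $\sum_{m=1}^{k} m!\,\stirling{k}{m}=a_k$ (only the size of the cut matters, not its internal structure, and the size is at most $k$). Then the whole problem reduces to an upper bound on the Fubini numbers. The standard route is the analytic formula $a_k = \sum_{j\ge 0} \frac{j^k}{2^{\,j+1}}$, or equivalently $a_k \sim \frac{k!}{2\,(\ln 2)^{k+1}}$; combining the asymptotic with Stirling's approximation $k!\le \sqrt{2\pi k}\,(k/e)^k e^{1/(12k)}$ gives $a_k \le C\sqrt{k}\,\bigl(\tfrac{k}{e\ln 2}\bigr)^k$ for an explicit constant, and one checks $\tfrac{1}{e\ln2}=0.5307\ldots < 0.649$ while absorbing the constant $C$ and any lower-order slack into the gap between $0.531$ and $0.649$ and into the $\sqrt{k}$ factor. (Alternatively, one can avoid asymptotics entirely: bound $m!\stirling{k}{m}\le m^k$ and sum the geometric-ish tail $\sum_{m\le k} m^k$, but this is looser, so the Fubini-number asymptotic is the cleaner tool.)

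The key steps, in order: (1) reduce ``routings over $A^+(S_i)$'' to ``surjections $[k]\twoheadrightarrow[m]$ summed over $1\le m\le k$'', observing $m\le k$; (2) identify this sum as the Fubini number $a_k$; (3) invoke the closed form / asymptotic $a_k\le \frac{k!}{2(\ln 2)^{k+1}}(1+o(1))$ together with Stirling to get $a_k\le\sqrt{k}\,(k/(e\ln 2))^k\cdot(\text{const})$; (4) verify numerically that the constant and the lower-order terms are swallowed by replacing the base $1/(e\ln 2)\approx 0.5307$ with $0.649$, so that $\sqrt k\,(0.649k)^k$ dominates for all $k\ge 1$ (a finite check handles small $k$ where the asymptotic slack is largest).

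The main obstacle is step (4): making the constant factor disappear rigorously. The asymptotic $a_k\sim \frac{k!}{2(\ln2)^{k+1}}$ hides lower-order multiplicative terms and a leading constant $\tfrac12$, and Stirling contributes its own $\sqrt{2\pi}$; one must argue these are uniformly bounded by the exponential gain $(0.649/0.531)^k$ for all $k$ and dispatch the handful of small $k$ by direct computation (for instance $a_1=1, a_2=3, a_3=13, a_4=75,\dots$, each easily seen to be below $\sqrt k\,(0.649k)^k$). I would want a clean non-asymptotic inequality for $a_k$ — e.g.\ $a_k\le \frac{k!}{2(\ln 2)^{k+1}}$ exactly (which follows from the series representation, since the terms form a slightly super-geometric sequence near the peak) — to make the bound fully self-contained rather than leaning on an $o(1)$.
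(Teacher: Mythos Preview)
Your reduction to surjection counting is correct up to a point: for a \emph{fixed} top-cut of size~$m$ the number of routings is exactly $m!\,\stirling{k}{m}$, and surjectivity forces $m\le k$. But the lemma asks for a bound that holds for \emph{any single} top-cut, so the quantity to control is $\max_{1\le m\le k} m!\,\stirling{k}{m}$, not the sum $a_k=\sum_{m} m!\,\stirling{k}{m}$. Your sentence ``the number of routings over the particular cut $A^+(S_i)$ is $\sum_{m=1}^{k} m!\,\stirling{k}{m}$'' is simply false: a particular cut has one size~$m$, and contributes one term.

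Of course the Fubini number upper-bounds the maximum, so one could hope to salvage the argument. Unfortunately your ``finite check'' is wrong: $a_2=3$ while $\sqrt{2}\,(0.649\cdot 2)^2\approx 2.38$, and $a_3=13$ while $\sqrt{3}\,(0.649\cdot 3)^3\approx 12.79$. The ordered Bell number genuinely exceeds $\sqrt{k}\,(0.649k)^k$ for $k\in\{2,3\}$, so the Fubini route cannot establish the lemma with the stated constant. (Asymptotically you are fine, since $1/(e\ln 2)\approx 0.531<0.649$, but the constant $0.649$ is close to tight for the single-term maximum and leaves no room to absorb the extra summands at small~$k$.)

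The paper avoids this by never summing: it bounds a single term $\ell!\,\stirling{k}{\ell}$ via the Rennie--Dobson inequality $\stirling{k}{\ell}\le\tfrac12\binom{k}{\ell}\ell^{k-\ell}$ together with Robbins' form of Stirling's approximation, obtaining $\ell!\,\stirling{k}{\ell}\le \sqrt{k}\,\bigl(\tfrac{\alpha}{1-\alpha}\bigr)^{(1-\alpha)k}e^{-\alpha k}k^k$ for $\ell=\alpha k$, and then numerically maximizes $g(\alpha)=(\alpha/(1-\alpha))^{1-\alpha}e^{-\alpha}$ over $\alpha\in(0,1)$ to get $0.649$. The constant is thus the value of a one-variable optimization, not an artifact of small-$k$ padding.
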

\begin{proof}
   A routing of $k$ paths over a \TOcut $C$ can be thought of as
   a partition of the paths into $\ell = |C|$ many non-empty sets along with a
   mapping of these sets to the arcs of $C$.
   We can assume that~$\ell \leq k$,
   since a routing of $k$ paths requires every \TOcut to have at most $k$ arcs.
   The number of ways to partition $k$ objects into $\ell$
   non-empty sets is $\stirling{k}{\ell}$, the Stirling number of
   the second kind, and there are $\ell!$ ways to assign each of
   these partitions to a specific arc.
   Thus, the number of routings is $\stirling{k}{\ell}\ell!$.
   To proceed, we use the upper bound~$\stirling{k}{\ell} \leq \frac{1}{2} {k
   \choose \ell} \ell^{k-\ell}$ due to Rennie and Dobson~\cite{StirlingBound}.
   We also make use of the tighter version of Stirling's approximation due to
   Robbins~\cite{TighterStirling}, which states that
  $$
     \sqrt{2\pi k} \left( \frac{k}{e} \right)^k e^{1/(12k+1)} \leq k! \leq \sqrt{2\pi k} \left( \frac{k}{e} \right)^k e^{1/12k}.
  $$
  Hence, we have the upper bound
  \begin{align*}
      \stirling{k}{\ell} \ell! &\leq \frac{1}{2} {k \choose \ell} \ell^{k-\ell} \ell!
                        = \frac{1}{2} \frac{k!}{(k-\ell)!} \frac{\ell^k}{\ell^\ell} \\
        &\leq \frac{\sqrt{k}}{2\sqrt{k-\ell}} \, \left(\frac{k}{e}\right)^k \left(\frac{e}{k-\ell}\right)^{k-\ell} e^{\frac{1}{12k} - \frac{1}{12(k-\ell)+1}}     \frac{\ell^k}{\ell^\ell} \\
        &\leq \sqrt{k} \, \frac{k^k \ell^{k-l}}{(k-\ell)^{k-\ell} e^\ell}.
  \end{align*}
  Letting~$\ell = \alpha k$ for~$\alpha \in (0,1)$, the above expression becomes
  $$
      \sqrt k \, \frac{(\alpha k)^{k - \alpha k}}{(k-\alpha k)^{k-\alpha k} e^{\alpha k}} k^k
      \leq \sqrt k \, \left(  \left(\frac{\alpha}{1-\alpha}\right)^{(1-\alpha)}
            e^{-\alpha} \right)^k k^k \leq \sqrt k \, (0.649 k)^k,
  $$
  where the constant $0.649$ can be derived numerically by maximizing
  $g(\alpha) = \left(\frac{\alpha}{1-\alpha}\right)^{(1-\alpha)} e^{-\alpha}$
  on the interval $[0,1]$.
\end{proof}

\begin{lemma}\label{lem:linear-systems}
    Each table has at most $\frac{4^{k^2}}{k! \, k^k}$ distinct linear
    systems.
\end{lemma}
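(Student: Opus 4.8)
The plan is to count the distinct coefficient matrices $\mathbf A$ that can appear in a table entry; the right‑hand side $\mathbf b$ is then determined by the flow values of the instance and contributes only to the bit‑complexity tracked by the $\maxf$ term, not to the number of distinct systems. The first step — and the one that needs care — is to normalise. The system a table entry literally carries has one row per arc processed so far, and over a long instance these rows can realise many different supports $g^{-1}(a)\subseteq[k]$ (a chain of vertices with two in‑ and two out‑arcs can keep inventing new supports), so counting them verbatim would introduce a dependence on $n$. To avoid this, observe that $\mathbf A$ has only $k$ columns, so its rows span a subspace of dimension at most $k$; a maximal linearly independent subset of its (still $0/1$) rows therefore has at most $k$ members and — for a satisfiable system — encodes the same set of solutions. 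Hence we may identify each system with such a reduced representative (a set of at most $k$ vectors of $\{0,1\}^k$, chosen canonically, e.g.\ lexicographically least), together with a single sink symbol for all unsatisfiable systems. This identification is harmless for the algorithm, because the only thing the dynamic program ever extracts is whether the final system is solvable in positive integers, and the rows dropped by the reduction are re‑derivable at any later step from the routing via Definition~\ref{def:extension}.

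It then remains to count reduced representatives, which is elementary. A reduced representative is a set of at most $k$ distinct elements of $\{0,1\}^k$, so there are at most
$$
  \sum_{r=0}^{k}\binom{2^k}{r} \;\le\; (k+1)\binom{2^k}{k} \;\le\; (k+1)\frac{(2^k)^k}{k!} \;=\; (k+1)\frac{2^{k^2}}{k!}
$$
of them (and one more for the unsatisfiable symbol). A routine estimate — equivalently $(k+1)\,k^k\le 2^{k^2}$ for all $k\ge 1$ — bounds this by $\frac{4^{k^2}}{k!\,k^k}$, which is the one place the particular closed form in the statement is used; the bound is far from tight, which is exactly what makes it compose cleanly with the routing count of Lemma~\ref{lem:num_routings} to yield the $2^{O(k^2)}$ factor of Theorem~\ref{mainthm}. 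Since nothing in the argument refers to the index $i$ (nor to $n$, $\maxf$, or the instance beyond $k$), the same bound holds for every table $T_i$, which is what the lemma asserts.

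The only real obstacle is conceptual rather than computational: one has to be willing to replace the systems the dynamic program actually builds — whose raw row sets grow with the input — by equivalent systems of at most $k$ rows, and to argue that doing so is legitimate for the purpose of bounding table size. Once that reduction is in place, the lemma is a one‑line binomial estimate; bypassing the linear algebra would instead require a structural argument that only $O(k)$ of the accumulated constraints are ever independent, which the rank bound makes unnecessary.
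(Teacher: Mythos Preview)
Your reduction to at most~$k$ linearly independent $0/1$ rows is the same move the paper makes, but the claim that ``the right-hand side $\mathbf b$ is then determined by the flow values of the instance and contributes only to the bit-complexity'' is where the argument breaks. Knowing the set of coefficient rows in $\{0,1\}^k$ does \emph{not} determine $\mathbf b$: the same row vector, say $(1,0,\ldots,0)$, arises whenever path~$1$ alone is routed over some arc, and in different branches of the DP that arc may have flow~$3$ or flow~$7$, yielding two consistent systems with identical $\mathbf A$ but distinct $\mathbf b$. More generally, two routings that induce the same partition of $[k]$ across a cut but assign the blocks to arcs carrying different flow values produce the same coefficient rows with different right-hand sides, and these are genuinely distinct entries of the table. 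Your count of ``sets of at most $k$ vectors of $\{0,1\}^k$'' therefore undercounts the linear systems.

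The paper closes this gap with one observation you omit: if the instance has more than $2^k$ distinct flow values it is trivially a no-instance, since there are only $2^k$ subsets of paths and hence at most $2^k$ achievable arc sums. Consequently each entry of $\mathbf b$ ranges over at most $2^k$ values, a reduced row (coefficient vector together with its $b$-entry) has at most $2^k\cdot 2^k = 4^k$ possibilities, and the system count becomes $\binom{4^k}{k}$ rather than your $\binom{2^k}{k}$. The paper then applies $\binom{n}{k}\le (n/k)^k$ (valid since $k\le\sqrt{4^k}$) to reach the stated bound. Everything else in your write-up---the rank-$k$ normalisation, the identification of systems with canonical row sets, the binomial estimate---is sound and mirrors the paper; what is missing is precisely this flow-value bound that brings $\mathbf b$ into the count in a controlled way.
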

\begin{proof}
  Without loss of generality, we can ensure each linear system $L$ has at most $k$ rows by removing linearly dependent rows.
  We note that because there are only $2^k$ subsets of weights, if $f$
  maps the arcs to more than $2^k$ unique flow values, there cannot be a flow decomposition of size $k$.
  Since the elements of~$\mathbf{b}$ can thus take on at most~$2^k$ many values,
  and~$\mathbf{A}$ contains binary rows of length~$k$, it follows that there are at most $4^k$ rows for~$L$.
  Accordingly, we can bound the number of possible
  linear systems by~$\binom{4^k}{k}$. By imposing an order on the rows
  we can remove a factor of~$1/k!$. Because~$k \leq \sqrt{4^k}$, we can apply
  the bound~$\binom{n}{k} \leq (n/k)^k$ which holds~\cite{lipton}
  for~$k \leq \sqrt{n}$; thus, the number of linear
  systems is at most~$4^{k^2}/(k! \, k^k)$.
\end{proof}

\noindent
Now that we have bounded the size of the dynamic programming tables,
we analyze the complexity of solving the linear systems in the final
table $T_n$. It has been shown that treating
linear systems as integer linear programs (ILPs)  produces integer solutions in fpt-time.\looseness-1

\begin{proposition}[\cite{lokshtanov2009integer}]\label{prop:ilp-fpt}
  Finding an integer solution to a given system~$L$ takes time at most
  $O(k^{2.5k+o(k)} |L|)$, where $|L|$ is the encoding size of the linear
  system.
\end{proposition}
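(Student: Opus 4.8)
The plan is to recast the problem as integer linear programming (ILP) feasibility with~$k$ variables and then invoke the classical fpt algorithm for ILP parameterised by the number of variables. Recall from Lemma~\ref{lem:correctness} that what we actually need is an integral solution~$\pweights$ of~$L$ whose entries are positive; since the entries are integers this means~$\pweights \geq \mathbf 1$, so together with the defining equalities~$\mathbf{A}\pweights = \mathbf{b}$ (each equality split into a pair of inequalities) we are looking at a system of linear inequalities over~$\Z$ in~$k$ unknowns, which after discarding linearly dependent rows has at most~$k$ rows.

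First I would, if needed, shrink the coefficients: the entries of~$\mathbf b$ are flow values and can be large, so I would apply the Frank--Tardos rounding procedure to replace~$\mathbf A$ and~$\mathbf b$ by integral data of bit-length polynomial in~$k$ with the same feasible set. Second, I would run Lenstra's algorithm in the form sharpened by Kannan, which decides feasibility of an ILP in~$k$ variables using~$k^{2.5k + o(k)}$ arithmetic operations on numbers polynomial in the (now small) instance, and which can be turned into a routine that outputs a feasible~$\pweights$ whenever one exists by the standard coordinate-by-coordinate self-reduction. Tallying the cost of the preprocessing and of the arithmetic gives the claimed bound~$O(k^{2.5k + o(k)}\,|L|)$, where~$|L|$ is the encoding size of~$L$.

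The main thing to watch is the bookkeeping: one must verify that the dependence on the magnitudes of the original flow values collapses into the single linear factor~$|L|$ rather than entering the exponent or inflating it super-linearly. The Frank--Tardos step is exactly what secures this, so the argument is a composition of known results rather than anything genuinely new; the statement above packages precisely this combination, and the details can be found in~\cite{lokshtanov2009integer}.
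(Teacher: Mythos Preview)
The paper does not prove this proposition; it is quoted as a black-box result from~\cite{lokshtanov2009integer} and used directly in the running-time analysis of Theorem~\ref{thm:runtime}. Your sketch---reduce to ILP feasibility in~$k$ variables, optionally compress coefficients via Frank--Tardos, then apply Kannan's refinement of Lenstra's algorithm---is precisely the standard argument that the cited reference packages, so there is nothing to compare against in the present paper.

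One small remark on your write-up: you say the system ``after discarding linearly dependent rows has at most~$k$ rows''. That is true for the equality block~$\mathbf{A}\pweights=\mathbf{b}$, but once you split each equality into two inequalities and append the~$k$ positivity constraints you have up to~$3k$ inequalities; this is harmless for the bound (the number of constraints enters only polynomially), but the sentence as written is slightly off.
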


\noindent
These results give the following upper bound on the runtime of
our algorithm, proving Theorem~\ref{mainthm}.

\begin{theorem}\label{thm:runtime}
  Algorithm~\ref{alg:main} solves \kFD
  in time $4^{k^2} k^{1.5k} k^{o(k)} 1.765^k \cdot (n + \lambda)$
  where $\lambda$ is the logarithm of the largest flow value of the input.
\end{theorem}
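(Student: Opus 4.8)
My plan is to bound the size of each dynamic programming table, bound the work done at and between tables, and multiply; the exponential factor $4^{k^2}k^{1.5k}k^{o(k)}1.765^k$ is essentially the product of the table-size bounds of Lemmas~\ref{lem:num_routings} and~\ref{lem:linear-systems} with the ILP cost of Proposition~\ref{prop:ilp-fpt}, while the $(n+\lambda)$ factor stays linear once one is careful about when the (possibly long) flow values are actually manipulated. Correctness of what the algorithm computes is already Lemma~\ref{lem:correctness}, so only the running time is at issue.

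First I would bound a single table. An entry is a pair $(g,L)$ of a routing out of $S_i$ and a linear system, and distinct entries must differ in one of the two coordinates, so by Lemmas~\ref{lem:num_routings} and~\ref{lem:linear-systems}
$$
  |T_i| \;\le\; \sqrt k\,(0.649k)^k \cdot \frac{4^{k^2}}{k!\,k^k}
        \;=\; \sqrt k\,(0.649)^k\,\frac{4^{k^2}}{k!}
        \;\le\; \frac{1}{\sqrt{2\pi}}\,(0.649\,e)^k\,\frac{4^{k^2}}{k^k}
        \;\le\; \frac{1.765^k\,4^{k^2}}{k^k},
$$
where the equality cancels $k^k$ via $(0.649k)^k=(0.649)^kk^k$, the next step uses $k!\ge\sqrt{2\pi k}\,(k/e)^k$ (Robbins' sharpening of Stirling's formula~\cite{TighterStirling}), and the last step uses $0.649\,e<1.765$.

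Next I would account for building the tables. Forming $T_i$ from $T_{i-1}$ means iterating over the entries of $T_{i-1}$ and, for each, over the extensions of its routing; distinct extensions are distinct routings out of $S_i$, so there are at most $\sqrt k(0.649k)^k$ of them, and the number of (entry, extension) pairs in round $i$ is at most $|T_{i-1}|\cdot\sqrt k(0.649k)^k\le 4^{k^2}1.765^k\cdot(0.649)^k\sqrt k$, the $k^{-k}$ from the table bound absorbing the $k^k$ from the extensions. Each pair adds $|A^+(v_i)|\le k$ rows and re-reduces the system, which is polynomial in $k$ and in the bit-length of the flow values; storing the flow values appearing in $\mathbf b$ by reference into the list of the at most $2^k$ distinct values, the combinatorial cost over all $n$ rounds is $4^{k^2}1.765^k k^{o(k)}\,n$. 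Finally, for each of the at most $|T_n|\le 4^{k^2}1.765^k/k^k$ systems in $T_n$ I would substitute the true flow values and call Proposition~\ref{prop:ilp-fpt} on the reduced system, whose encoding size is $O(k(k+\lambda))$; this costs
$$
  |T_n|\cdot O\!\bigl(k^{2.5k+o(k)}\,k(k+\lambda)\bigr)
  \;\le\; \frac{4^{k^2}1.765^k}{k^k}\cdot k^{2.5k+o(k)}\cdot O(k^2+k\lambda)
  \;=\; 4^{k^2}\,k^{1.5k}\,k^{o(k)}\,1.765^k\,(1+\lambda),
$$
after absorbing $O(k^2)$ and $O(k)$ into $k^{o(k)}$, and adding this to the build cost gives the theorem. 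The step I expect to be the real obstacle is precisely this accounting of the flow values: a direct implementation would do $\Theta(n)$ rounds of arithmetic on $\lambda$-bit numbers and thus incur a spurious $n\lambda$ term, so making the dependence on $\lambda$ additive forces one to carry the systems through the dynamic program symbolically (by reference to the $2^k$ distinct flow values) and defer all genuine arithmetic on the flow values to the processing of $T_n$.
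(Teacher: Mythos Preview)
Your proof is correct and follows the same approach as the paper's: combine the table-size bounds of Lemmas~\ref{lem:num_routings} and~\ref{lem:linear-systems}, apply Proposition~\ref{prop:ilp-fpt} to the systems in the final table, and simplify via $k^k/k!\le e^k$ (so that $0.649\cdot e<1.765$). Your closing concern about making the $\lambda$-dependence additive rather than multiplicative is in fact more careful than the paper, which simply asserts that reducing and consistency-checking each system is ``polynomial in the size of the matrix (which is $k\times k$)'' without further discussion of the $\lambda$-bit entries of~$\mathbf b$.
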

\begin{proof}
  The correctness of the algorithm was already proven in
  Lemma~\ref{lem:correctness}, so all that remains is to bound the running time.
  By Lemmas~\ref{lem:num_routings} and~\ref{lem:linear-systems}, the total
  number of elements in DP table~$T_i$ is bounded by
  $$
    \frac{4^{k^2}}{k! \, k^k} \cdot \sqrt k \, (0.649 k)^k = \sqrt k \frac{4^{k^2} 0.649^k}{k!}.
  $$
  Reducing the linear systems and  checking for consistency is polynomial in
  the size of the matrix (which is $k \times k$). Finally, we need
  to find a feasible solution among all the linear systems left after the last
  DP step. We apply Proposition~\ref{prop:ilp-fpt} to these systems, whose
  encoding size is bounded by~$k^{O(1)} \lambda$.
  We arrive at the desired running time of
  $$
    \frac{4^{k^2} k^{2.5k} 0.649^k }{k!} k^{o(k)} \cdot (n + \lambda)
    \leq 4^{k^2} k^{1.5k}  1.765^k k^{o(k)} \cdot (n + \lambda)
  $$
  where we use the well-known bound $\tfrac{k^k}{k!} \leq e^k$.
\end{proof}

\section{Implementation}
\label{sec:implement}

To establish that our exact algorithm for \kFD is a viable alternative to the
heuristics currently in use by the computational biology community, we
implemented Algorithm~\ref{alg:main} as the core of a \FD solver
\toboggan~\cite{toboggan}. The solver iterates over values of $k$
in increasing order until reaching a yes-instance of \kFD.
\toboggan also implements backtracking to recover the \stpaths.
Making \toboggan's runtime competitive with existing implementations of
state-of-the-art heuristics required non-trivial algorithm engineering.
Our improvements broadly fall into three categories: preprocessing, pruning,
and low-memory strategies for exploring the search space.
The remainder of this section describes the most noteworthy techniques implemented in \toboggan that are not captured by Algorithm~\ref{alg:main}.

\subsection{Preprocessing}\label{sec:preprocessing}

\toboggan implements two key preprocessing routines. The first
generates an equivalent instance on a simplified graph
using a series of arc contractions. The second calculates
lower bounds on feasible values of $k$ to reduce the number of calls
to the \kFD solver.
\preparaspace
\paragraph{Graph reduction.}
We first reduce the graph by contracting all arcs into/out of nodes of
in-/out-degree 1.
We prove in Lemma~\ref{lem:contracted} that given a flow decomposition of the contracted graph, we can efficiently recover a decomposition of the same size for the original graph.

\begin{lemma}\label{lem:contracted}
    Let $uv$ be an arc for which $|A^+(u)| = 1$ or $|A^-(v)| = 1$ and $G'$ the graph created by contracting $uv$.
    Then $(G', f)$ has a flow decomposition $(P', \pweights)$ of size $k$ iff $(G, f)$ has a flow decomposition $(P, \pweights)$ of size $k$.
	Moreover, we can construct $(P, \pweights)$ from $(P', \pweights)$ in polynomial time.
\end{lemma}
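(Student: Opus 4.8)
The plan is to prove the equivalence by exhibiting a weight-preserving bijection between the $\stpaths$ of $G$ and those of $G'$ that carries flow decompositions to flow decompositions. The two cases $|A^+(u)|=1$ and $|A^-(v)|=1$ are mirror images (reverse every arc of $G$ and swap the roles of $s$ and $t$), so I would give the argument for $|A^+(u)|=1$ and obtain the other by symmetry. The first observation is that the degree hypothesis forces $f(uv)=\sum_{a\in A^-(u)}f(a)$, since $uv$ is the only out-arc of $u$. Writing $w$ for the node obtained by contracting $uv$, its in-arcs in $G'$ are $A^-(u)\cup(A^-(v)\setminus\{uv\})$ and its out-arcs are $A^+(v)$; assigning every surviving arc its original $f$-value yields a flow $f'$ on $G'$, conservation at $w$ being exactly the displayed identity combined with conservation at the old node $v$.

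Next I would set up the path correspondence. Contracting $uv$ sends an \stpath $p$ of $G$ to an \stpath of $G'$: if $p$ traverses the segment $u,uv,v$ we collapse it to $w$; if $p$ passes through $v$ but not $u$ (which is possible because $A^-(v)$ may be large) we collapse $v$ to $w$; otherwise $p$ is unchanged. Conversely, an \stpath $p'$ of $G'$ that visits $w$ enters along some arc $a$, which is either an in-arc of $u$ or an in-arc of $v$ different from $uv$; we expand $p'$ by routing it through $u,uv,v$ in the first case and straight through $v$ in the second, and leave $p'$ untouched if it avoids $w$. These two operations are mutually inverse, so they form a bijection, and declaring corresponding paths to have the same weight makes it weight-preserving; in particular a size-$k$ family of paths on one side maps to a size-$k$ family on the other, and the expansion step runs in time linear in the total length of the paths, which gives the ``moreover'' clause.

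It then remains to check that the flow-decomposition equations transfer across this bijection. An arc of $G'$ not incident to $w$ is literally an arc of $G$ carrying the same set of paths, so its equation is unchanged. An arc $e$ incident to $w$ corresponds to a unique arc $\hat e$ of $G$ (an in-arc of $u$, an out-arc of $v$, or a remaining in-arc of $v$), the bijection sends the paths over $e$ exactly to the paths over $\hat e$, and $f'(e)=f(\hat e)$, so these equations match as well. The only arc still to verify is $uv$: the $G$-paths that use $uv$ are precisely the expansions of those $G'$-paths that enter $w$ from an in-arc of $u$, and since every $G'$-path through an in-arc $a\in A^-(u)$ must pass through $w$, the weights of these paths sum to $\sum_{a\in A^-(u)}f'(a)=\sum_{a\in A^-(u)}f(a)=f(uv)$. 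Hence $(P,\pweights)$ is a flow decomposition of $(G,f)$ if and only if its image $(P',\pweights)$ is a flow decomposition of $(G',f)$, and conversely.

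I expect the one delicate point to be this last piece of bookkeeping around $uv$: one must correctly account for the paths that pass through the higher-degree endpoint (here $v$) without traversing $uv$, and confirm that after re-expansion the total weight placed on $uv$ is exactly $f(uv)$ rather than something smaller. This is precisely where the hypothesis $|A^+(u)|=1$ (respectively $|A^-(v)|=1$) enters, and it is why the two symmetric cases each require their own mirrored verification instead of a verbatim copy.
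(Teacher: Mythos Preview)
Your proposal is correct and takes essentially the same approach as the paper: reduce to the case $|A^+(u)|=1$ by the arc-reversal symmetry, set up the natural contraction/expansion correspondence between \stpaths of $G$ and $G'$, and verify that the per-arc flow equations transfer (with the only non-obvious arc being $uv$, handled via flow conservation at $u$). Your write-up is in fact more explicit than the paper's---you spell out the bijection on all \stpaths and carefully handle paths through $v$ but not $u$, whereas the paper works directly on the decomposition paths and leaves some of this implicit---but the underlying argument is the same.
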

\begin{proof}
    We first note that if $G^R$ is the graph formed by reversing the directions of the arcs in $G$, any solution to $G$ can be transformed into a solution to $G^R$ by reversing the order of the paths and maintaining the same weights.
    Since this reversal is involutive, the correspondence between solutions to $G$ and $G^R$ is bijective, so it suffices to consider the case $|A^+(u)|=1$.

    Given $G$ with a node $u$ that has $|A^+(u)|=1$, let $G'$ be the graph $G$ with arc $uv$ contracted.
    Given a flow decomposition $(P, \pweights)$ of $(G, f)$, we construct a corresponding decomposition $(P', \pweights)$ for $(G', f)$ as follows.
    Every path $p\in P$ containing $u$ must have $v$ as the vertex succeeding $u$.
    Removing $u$ from each such $p$ will create a valid path in $G'$, since $A^-(u)$ becomes part of $A^-(v)$ in $G'$.
    Moreover, the incidence of paths on each other arc remains unchanged, so the solution covers the flows in $G'$.

    In the reverse direction, consider a labeling of the arcs of $G'$ such that we can distinguish among the in-arcs of $v$ those which exist in $G$ from those that result from contracting $uv$.
    Let $A_{\text{new}}$ be the set of latter arcs.
    We construct $P$ from $P'$ as follows.
    For each path $p\in P'$, if $p$ is routed over $xv\in A_{\text{new}}$ we modify $p$ to include the arcs $xu$ and $uv$, rather than the arc $xv$, and add the modified path to $P$.
    If no such arc lies in $p$, we simply add $p$ to $P$.

    As in the forward direction, it is clear that any arcs in $G$ without $u$ as an endpoint have their flow values covered by $P$, and that every path in $P$ is an \stpath in $G$.
    Since $xv\in A_{\text{new}}$ is derived from an arc $xu\in G$, if $Q'\subset P'$ are the paths routed over $xv$ and $Q$ is the set of paths corresponding to $Q'$ in $P$, then $Q$ exactly covers $xu$.
    Since every path in $P$ routed through $A^-(u)$ subsequently is routed over $uv$ and the flow values on the in-arcs of $u$ sum to $f(uv)$, $uv$ is also covered by $P$.
    Thus $(P, \pweights)$ is a valid solution to $(G,f)$.
\end{proof}
\paragraph{Lower bounds on $k$.}
To reduce the number of values of $k$ that \toboggan considers before reaching a yes-instance,
 we compute a lower bound on the optimal value of $k$.
One immediate lower bound is the size of the largest \TOcut;
we implement this in conjunction with the additional bound established in Lemma~\ref{lem:k_opt_bound}.
Intuitively, if two \TOcuts share few common flow values, many of the arcs must have multiple paths routed over them.

\begin{lemma}
    \label{lem:k_opt_bound}
    Given a flow $(G,f)$, let $C_1$ and $C_2$ be any two \TOcuts with $|C_1| \geq |C_2|$.
    Letting $\mathcal{F}(S)$ be the multiset of flow values occurring in $S$,
    set $F_1 = \mathcal{F}(C_1) \setminus \mathcal{F}(C_2)$ and $F_2 = \mathcal{F}(C_2) \setminus \mathcal{F}(C_1)$.
    If $(G, f)$ has a flow decomposition of size $k$, then
    \begin{equation}\label{eqn:lowerboundk}
    k \geq |\mathcal{F}(C_1) \cap \mathcal{F}(C_2)| + \tfrac{2}{3}(|F_1|+|F_2|),
  \end{equation}
    and this lower bound is tighter than the cutset size $|C_1|$ iff $|F_1| < 2|F_2|$.
\end{lemma}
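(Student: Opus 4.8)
The plan is to fix a hypothetical flow decomposition $(P,\pweights)$ of $(G,f)$ with $|P|=k$ and extract three linear inequalities from it, one for each cut in isolation and one coupling them; adding the three eliminates the auxiliary quantities and leaves exactly~\eqref{eqn:lowerboundk}. Let $s_i$ be the number of arcs of $C_i$ that are \emph{saturated}, i.e.\ carry a single path. We may assume every arc of $G$ carries positive flow, since a zero-flow arc lies on no path and may be discarded; then every arc of a \TOcut lies on at least one path, and every non-saturated arc on at least two. Since $C_1$ and $C_2$ are \TOcuts, each \stpath in $P$ crosses each of them exactly once, so $C_i$ partitions $P$ into $|C_i|$ nonempty groups. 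Counting path--arc incidences across $C_1$, the $s_1$ saturated arcs contribute one path each and the remaining $|C_1|-s_1$ arcs at least two, so $k \ge 2|C_1|-s_1$; symmetrically $k \ge 2|C_2|-s_2$.

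The coupling inequality is the heart of the argument. Distinct paths saturating arcs of $C_i$ saturate distinct arcs, so exactly $s_i$ of the $k$ paths saturate some arc of $C_i$, whence at least $s_1+s_2-k$ paths saturate an arc of $C_1$ \emph{and} an arc of $C_2$. If $p_j$ saturates $a\in C_1$ and $a'\in C_2$ then $f(a)=w_j=f(a')$. Grouping these doubly-saturating paths by weight, a value $v$ shared by $\mu_v$ of them is witnessed by $\mu_v$ distinct arcs of flow value $v$ in $C_1$ and $\mu_v$ in $C_2$, so $v$ has multiplicity at least $\mu_v$ in $\mathcal{F}(C_1)\cap\mathcal{F}(C_2)$; summing over $v$ gives $|\mathcal{F}(C_1)\cap\mathcal{F}(C_2)| \ge s_1+s_2-k$, i.e.\ $k \ge s_1+s_2-|\mathcal{F}(C_1)\cap\mathcal{F}(C_2)|$ (which also holds trivially when $s_1+s_2<k$).

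It then remains to combine. Writing $m=|\mathcal{F}(C_1)\cap\mathcal{F}(C_2)|$, so that $|C_i|=m+|F_i|$, adding the three inequalities cancels $s_1,s_2$ and yields $3k \ge 2|C_1|+2|C_2|-m = 3m+2(|F_1|+|F_2|)$, which is~\eqref{eqn:lowerboundk}. Finally, comparing with the (better) trivial bound $|C_1|=m+|F_1|$, the new bound is strictly larger exactly when $\tfrac23(|F_1|+|F_2|)>|F_1|$, i.e.\ when $|F_1|<2|F_2|$, giving the last claim.

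I expect the main obstacle to be the coupling inequality: recognizing that saturated arcs are the right object, that a path saturating an arc on each cut pins down a common flow value, and---because weights may repeat---that one has to reason at the level of multiset multiplicities to turn ``$s_1+s_2-k$ doubly-saturating paths'' into ``$s_1+s_2-k$ members of $\mathcal{F}(C_1)\cap\mathcal{F}(C_2)$''. The two single-cut counts and the closing arithmetic are routine.
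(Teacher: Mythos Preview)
Your argument is correct and, in fact, more rigorous than the paper's. The paper's proof is an informal extremal argument: it sets aside $|\mathcal I|$ paths, one per shared flow value, and then asserts that the remaining $h=k-|\mathcal I|$ paths can account for at most $h$ values in $F_1$ (by saturating arcs of $C_1$) and at most $h/2$ in $F_2$ (by pairing over arcs of $C_2$), giving $\tfrac32 h \ge |F_1|+|F_2|$. This is intuitive but glosses over why that configuration is extremal. Your route is genuinely different: you introduce the auxiliary counts $s_1,s_2$ of saturated arcs, derive the two single-cut inequalities $k\ge 2|C_i|-s_i$ by counting path--arc incidences, and then extract the coupling inequality $|\mathcal I|\ge s_1+s_2-k$ from the pigeonhole of doubly-saturating paths together with a careful multiset-multiplicity argument. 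Summing the three inequalities eliminates $s_1,s_2$ and gives~\eqref{eqn:lowerboundk} directly. Your version makes every step a verifiable inequality and handles repeated weights explicitly; the paper's version is shorter but relies on the reader accepting the extremal scenario without proof. The final comparison with $|C_1|$ is the same one-line arithmetic in both.
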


\begin{proof}
  Suppose $k$ paths cross cutsets $C_1$ and $C_2$.
  For every flow value in $\mathcal{I} = \mathcal{F}(C_1) \cap \mathcal{F}(C_2)$, it is possible a single path saturates the arc with that flow value in both cuts.
  Consider the remaining $h = k - |\mathcal{I}|$ paths.
  To maximize the number of distinct flow values these $h$ paths produce, let each path saturate an arc in $C_1$, yielding $h$ values in $F_1$, and then route those $h$ paths in pairs across distinct arcs in $C_2$. This produces at most $h/2$ new flow values in $F_2$. This yields $(3/2)h \geq |F_1| + |F_2|$, and substituting for $h$ proves Inequality~\eqref{eqn:lowerboundk}.

  To prove the relationship between this lower bound and $|C_1|$ there are two cases.
  If $|F_1| \geq 2|F_2|$, then by substituting we can upper bound $(2/3)(|F_1|+|F_2|) \leq |F_1| = |C_1| - |\mathcal{I}| \leq |C_1|$.
  If instead $|F_1| < 2|F_2|$, substituting yields $|\mathcal{I}| + (2/3)(|F_1|+|F_2|) > |\mathcal{I}| + |F_1| = |C_1|$.
\end{proof}

\subsection{Search Space Strategies}\label{sec:control-flow}
To reduce the memory required by dynamic programming, our implementation diverges from the pseudocode of Algorithm~\ref{alg:main} in two ways.
Specifically, we solve a restricted weight variant of \kFD and use a separate phase to recover the \stpaths.
\preparaspace
\paragraph{Weight restriction.}
Rather than making one pass through the dynamic programming that infers the
solution weights from the linear systems, we employ a multi-pass strategy.
Each pass restricts the potential weight vector by fixing a
subset of its entries to weights chosen from flow values in the
input instance while leaving the remaining values as variables to
be determined by the DP routine. Since we observe that most instances in
the data set (see Section~\ref{sec:data_setup}) admit decompositions whose paths saturate at least one arc, we begin by enumerating potential weight-vectors
whose values are \emph{all} taken from flow values. If the DP routine finds
no solution for any of these vectors, we enumerate vectors in which exactly one
value is undetermined and all other entries are again taken from flow values,
\etc If all these attempts should fail, the final pass in our strategy will
leave all weights to be determined by the DP and hence is guaranteed to find a
solution, should one exist. For most instances, however, a solution vector
is quickly guessed, vastly reducing the number of candidate linear systems
in the DP and thus saving both time and memory.
\preparaspace
\paragraph{Path recovery.}
Computing the path weights requires storing only the current and previous dynamic programming tables in memory.
In contrast, recovering the paths requires storing \emph{all} dynamic programming tables.
For this reason, we first determine the weights $\pweights$ and then recover the paths by
running the dynamic programming again with weights restricted to $\pweights$.
This is equivalent to solving the \WPFDfull problem in Section~\ref{sec:no-poly-kernel}.

\subsection{Pruning}\label{sec:pruning}
Within the dynamic programming we employ a number of heuristics that help recognize algorithmic states that cannot lead to a solution.
\preparaspace
\paragraph{Weight bounds.}
We augment the weight constraints imposed by the linear systems with a set of routing-independent constraints.
These take the form of upper and lower bounds ($B_i$ and $b_i$, respectively) on the $i$th smallest weight $w_i$.

First, we compute $b_k$ by noting that for any \TOcut $C$ and any arc $a\in C$, only $k-|C|+1$ paths can be routed over $a$, i.e.\ $b_k \geq f(a)/(k-|C|+1)$.
Then, we compute $B_k$ by finding the largest weight of any \stpath. This can be done via a simple dynamic programming algorithm: for any node $v$, an $s$-$v$--path with weight $w$ requires a path of equal weight to an in-neighbor $u$ such that $w \leq f(uv)$.

To compute the bounds on the other weights, $i < k$, we let $B_1$ be the smallest flow value and $B_i$ be the
smallest\footnote{If no such flow value exists, set $B_i = B_k$.}
flow value greater than $\sum_{j=1}^{i-1} B_j$.
In other words, if $f(a) > \sum_{j=1}^{i-1} w_j$, then $a$ must be part of a path of weight at least $w_i$.
Finally, we use these upper bounds to derive the $b_i$s.
If the weights greater than $w_i$ sum to $W$, then by the pigeonhole principle
$w_i \geq (F-W)/i$, where $F$ is the total flow. Thus, $w_i \geq (F -
\sum_{j=i+1}^{k} B_j)/(i+1)$.

For simplicity of implementation, we force the values of our weight vector to appear in non-decreasing order.
In each dynamic programming table, for each linear system $L$ we run a linear program to see if there are any (rational) weights within the bounds that satisfy $L$.
If not, we remove the entry of the dynamic programming table containing $L$.
We also use these bounds before dynamic programming to catch partially determined weight vectors that will not be able to yield a solution.
For example, the predetermined vector\footnote{A $*$ indicates an undetermined value.} $\pweights = (1,3,*,*,5,11)$ is incompatible with a lower bound on the third entry greater than 5.

\preparaspace
\paragraph{Storing linear systems.}
Within the dynamic programming we store linear systems in row-reduced echelon form (RREF).
When a new row $r$ is introduced to the system, we perform Gaussian elimination to convert the new row to RREF, checking for linear dependence and inconsistency in $O(k^2)$ time.
The iterative row reduction also has the advantage of revealing determined path weights even if the system is not fully determined, and thus we can check whether any such values are not positive integers.
Furthermore, once the system is full rank, no additional computation needs to be done to recover the weights.

\section{Experimental Results}
\label{sec:experiments}

In this section we empirically evaluate the efficiency and solution quality of
\toboggan by comparing with the current state-of-the-art, \catfish, on
a corpus of simulated RNA sequencing data. Additionally, we
use \toboggan to theoretically validate the \kFD problem as a model for the
transcript assembly task.

\subsection{Experimental setup}\label{sec:data_setup}
Our experiments were run on a corpus of data used in previous experiments by
Shao and Kingsford~\cite{catfish-github,Catfish}. The full data set contains over 17 million instances, each generated by simulating RNA-seq reads from a transcriptome and building the \stdag and flow using the procedure described in Section~\ref{sec:intro}.
Two software packages were used to simulate reads:  \textsf{Flux Simulator}~\cite{flux-simulator}, which used transcriptomes from humans, mice, and zebrafish, and \textsf{Salmon}~\cite{salmon}, which only used human transcriptomes.
We will refer to these four subsets as \human, \mouse, \zebra, and \salmon, respectively.
The simulated reads allow us to know the true transcripts and thus the corresponding flow decomposition \textit{a priori}; we will refer to this
particular decomposition as the ``ground truth''.

In our three experimental analyses (model validation, algorithm timing, and solution-quality) we report results on just the three smaller datasets (\human, \mouse, \zebra).
We terminated \toboggan on any run whose weight computation took longer than
800 seconds on \zebra, and 50s on \human and \mouse\footnote{Under these constraints, \toboggan timed out on $5\,136$ of the 4M instances in these smaller data sets.}.
After performing more exhaustive computations on these smaller data sets, we then used the larger \salmon data set (13.3M) to add one more data point for the model validation experiment.
Because of the size of \salmon, we terminated \toboggan after only 1 second.
This smaller time limit resulted in \toboggan timing out on a significantly larger portion of \salmon than the other three ($3\%$ on \salmon compared to $0.1\%$ on the other data sets), which would bias the results on \salmon toward its simpler instances.
Because of this, we omit the \salmon results from the timing and solution-quality experiments.

For a small number of the 17M instances, the ground truth decomposition contains at least one path that appeared multiple times, corresponding to a subsequence repeated in different locations of the transcriptome.
The data provides no means for mathematically or biologically distinguishing these; thus, we aggregated the duplicated paths into a single path, summing their weights.
Additionally, we removed ``trivial'' instances in which the graph consisted of a single \stpath;
on such graphs \toboggan terminates during preprocessing without executing the \kFD algorithm described in this paper.
We remark that this is a departure from the experimental setup of \catfish, which included such graphs, explaining some slight differences in our statistics.
About $50\%$ of the 17M graphs in the corpus are trivial in this manner.
The number of non-trivial graphs for each species is summarized in Table~\ref{tab:validation}.
All experiments were run on a
dedicated system with an Intel i7-3770 (3.40GHz, 8 cores), an 8192 KB cache, and 32 GB of RAM.

\subsection{Benchmarking}\label{sec:exp:benchmark}
We start by analyzing the efficiency of \toboggan and \catfish,
noting that this compares a Python implementation~\cite{toboggan} with C++ code~\cite{catfish-github}.
Their runtimes on the 1.4M non-trivial instances in the smaller data sets are shown side-by-side in Figure~\ref{fig:runtimes-all}.
We observe that the two implementations are both quite fast on the vast majority of instances: their
median runtimes are $1.24$ milliseconds (ms) (\toboggan) and $3.47$ms (\catfish).
However, the implementations have different runtime distributions---whereas \catfish is consistent, terminating between
2.3--4.6ms on $90\%$ of instances and never running longer than 1.3 seconds,
\toboggan trades off faster termination, e.g. less than $2$ms on $80\%$ of instances,
with a higher variance and a small chance of a much longer runtime, e.g. over $50$ seconds on $0.48\%$ of instances.

\begin{figure}
  \centering
  \includegraphics[width=0.95\textwidth]{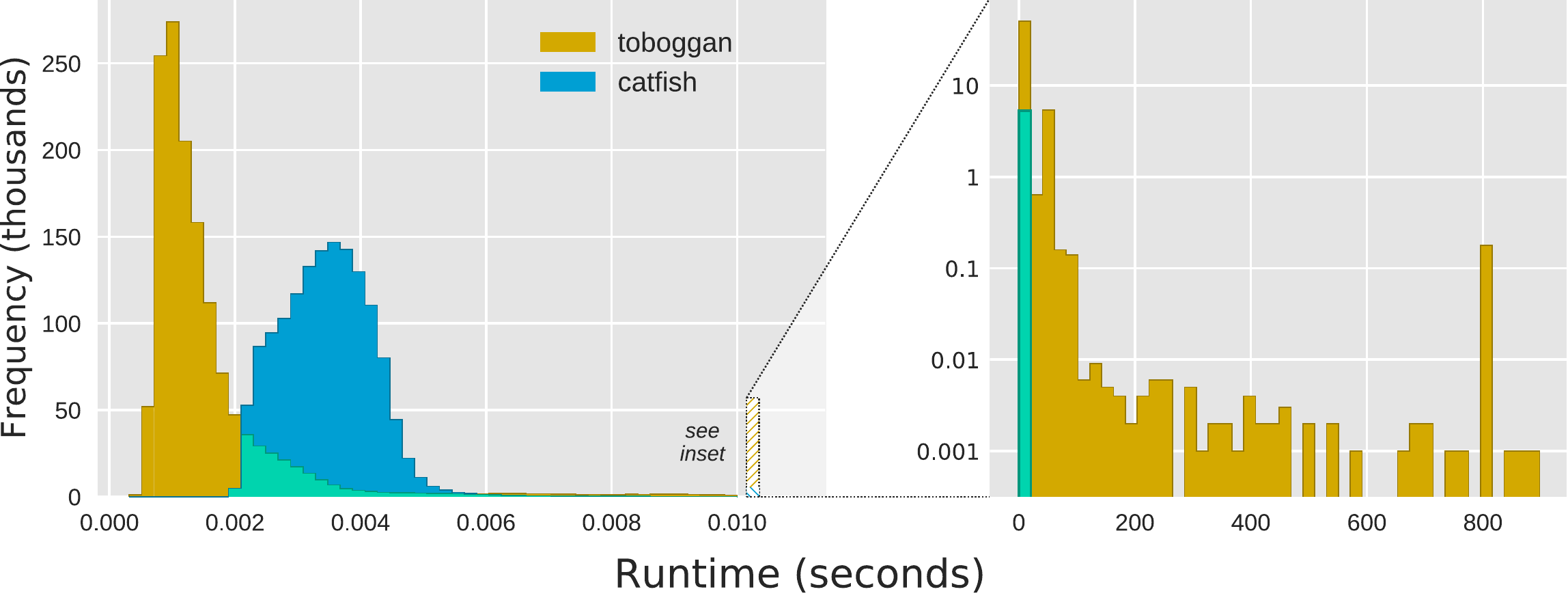}%
  \caption{\label{fig:runtimes-all}
    Runtimes of \toboggan and \catfish on all non-\salmon, non-trivial instances. The $y$-axes indicate the number of instances on which the algorithms terminate in the given time window.
    }
\end{figure}

\subsection{Model Validation}\label{sec:exp:validation}
Previous papers that use flow decompositions to recover RNA
sequences~\cite{Catfish,tomescu2013novel} implicitly assume that the
true RNA sequences correspond to a \emph{minimum} size flow decomposition,
as opposed to one with an arbitrary number of paths.
Because \toboggan provably finds the minimum size of a flow decomposition, our implementation enables us to investigate exactly how often this assumption holds in practice.

Table~\ref{tab:validation} gives the percentage of all non-trivial instances whose ground truth decompositions are in fact minimum decompositions, as well as the percentage of ground truth decompositions we proved have non-optimal size.
We conclude from this table that the \FD problem is in fact
a useful model for transcript assembly, which underscores the need for
efficient algorithms to compute minimum decompositions.

\begin{table}[!ht]
    \centering
    \begin{tabularx}{0.9425\textwidth}{lrrrrrrr}
      \toprule
      dataset &  instances &  non-trivial &  nodes  &  deg & $k$  &  optimal &  non-optimal \\
      \midrule
      \zebra & 1,549,373 & 445,880 & 18.49 & 2.27 & 2.32 &  99.91\% & 0.053\% \\ 
      \mouse & 1,316,058 & 473,185 & 18.67 & 2.37 & 2.75 &  99.40\% & 0.074\% \\ 
      \human & 1,169,083 & 529,523 & 18.79 & 2.41 & 2.83 &  99.49\% & 0.043\% \\ 
      \salmon & 13,300,893 & 7,153,472 & 20.54 & 2.55 & 3.74 &  94.39\% & 0.035\% \\ 
      \midrule
      all & 17,335,407 & 8,602,060 & 20.22 & 2.52 & 3.55 & 95.27\% & 0.039\% \\ 
    \end{tabularx}\vspace*{1em}
    \caption{\label{tab:validation}
        Summary of the RNA sequencing dataset~\cite{catfish-github}.
        Statistics are for non-trivial instances. Columns 4 through 6 give averages; column 7 (8) reports the percent of non-trivial ground truth decompositions that are optimal (non-optimal) size.
        Because \toboggan timed out on some instances, these percentages do not sum to 100.
        The high percentage of instances with ground truth of minimum size
        supports the use of \FD as a model for transcript assembly.
    }
    \vspace*{-0.3in}
\end{table}

\subsection{Ground Truth Recovery}
Though the ground truth flow decompositions are almost always of
minimum size, it is biologically desirable to find a \emph{particular} minimum
size decomposition rather than an arbitrary one.
In this section we investigate how often the decompositions output by \toboggan and \catfish are identical to the ground truth decomposition, restricting our attention to non-trivial instances in which the ground truth decomposition is of minimum size.
Additionally, for those instances where each algorithm does not exactly recover the ground truth, we analyze the similarity of the (imperfect) path set to the ground truth, using the Jaccard index.\looseness-1

The table in Figure~\ref{fig:solution-quality} summarizes the performance of the \toboggan and \catfish implementations in
exactly computing the ground truth decompositions.
Their behavior was quite similar, with slight differences at each decomposition size, and a $0.2\%$ advantage for \toboggan overall.

For instances where an algorithm's output is not identical to the ground truth, an output can still recover some useful information if it is highly similar to the ground truth decomposition.
With this in mind, we evaluate how similar each algorithm's output is to the ground truth when they do not exactly match\footnote{There are 43,817 such instances for \catfish and 41,783 for \toboggan.}.
The plot in Figure~\ref{fig:solution-quality} shows the distribution of the Jaccard index of each algorithm's output compared to the ground truth paths.

\begin{figure}
    \centering
    \begin{minipage}[t]{0.40\textwidth}
        \resizebox{\textwidth}{!}{
             \begin{tabular}{rrccrr}
                 \toprule
                 $k$ & instances  &  &  &  \catfish &  \toboggan  \\
                 \midrule
                 2 &        63.2791\% & & &   0.992 &   \textbf{0.995} \\
                 3 &        22.0775\% & & &   0.967 &   \textbf{0.969} \\
                 4 &         8.5237\% & & &   \textbf{0.931} &   0.930 \\
                 5 &         3.4920\% & & &   0.886 &   0.886 \\
                 6 &         1.5375\% & & &   \textbf{0.830} &   0.828 \\
                 7 &         0.6698\% & & &   \textbf{0.788} &   0.780 \\
                 8 &         0.2889\% & & &   \textbf{0.767} &   0.766 \\
                 9 &         0.1241\% & & &   0.740 &   \textbf{0.743} \\
                10 &         0.0070\% & & &   0.752 &   \textbf{0.802} \\
                11 &         0.0004\% & & &   0.500 &   0.500 \\
                \midrule
                all & 100\% & & &   0.973 &   \textbf{0.975} \\
                \bottomrule
             \end{tabular}
         }
    \end{minipage}
    \hspace{0.15in}
    \begin{minipage}[t]{0.49\textwidth}
        \raisebox{-0.48\height}{\includegraphics[width=\textwidth]{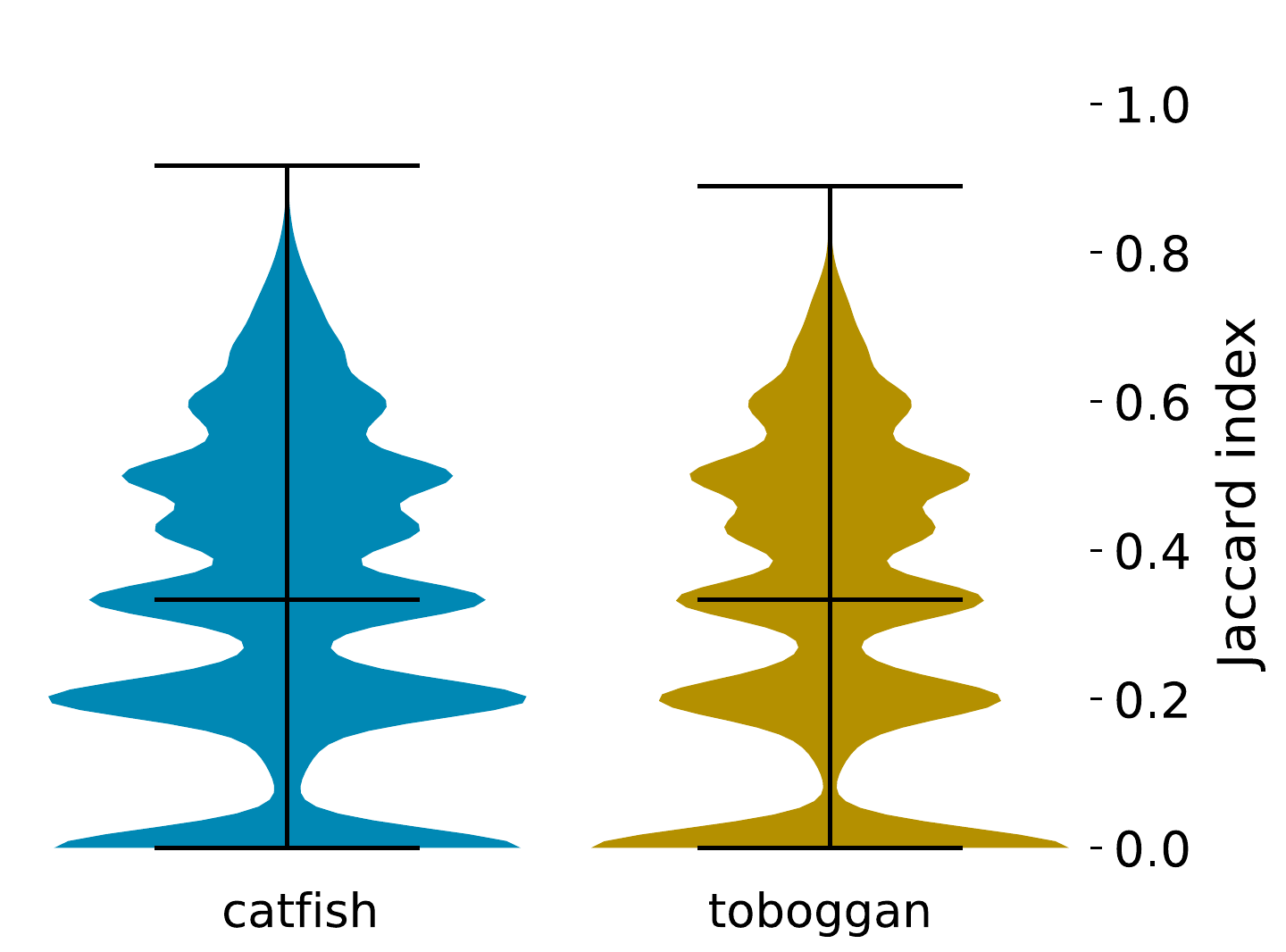}}%
    \end{minipage}%
    \caption{\label{fig:solution-quality}
       (\textit{Left}) Proportion of ground truth path sets that \catfish  and \toboggan recover exactly, organized by path set size ($k$).
       Bold numbers indicate the algorithm with better performance.
       (\textit{Right}) Distributions of the Jaccard index between the algorithms' solutions and the ground truth
       on instances where the paths are not exactly recovered.
   }
\end{figure}

\section{Kernelization lower bounds}\label{sec:no-poly-kernel}

As described in Section~\ref{sec:preprocessing}, our implementation employs a graph reduction algorithm that rapidly identifies any trivial graph and immediately solves \FD.
Out of the 17M total instances, this preprocessing solves all of the roughly 8.7M trivial instances.
On non-trivial instances, \toboggan then tries to identify the correct number $k$ of paths in an optimal solution.
The na\"ive lower bound from the largest edge cut is equal to the correct value of $k$ in $96.5\%$ of the 8.6M non-trivial instances;
incorporating the bound from Lemma~\ref{lem:k_opt_bound} brings this up to $98.4\%$.

In the framework of parameterized complexity, it is therefore natural to ask
whether~\kFD admits a polynomial kernel. Below we answer the question in the
negative, despite the real-world success of our preprocessing. Our proof
strategy requires hardness reductions involving the following restricted
variants of \FD.

\begin{problem}{\UFDfull (\UFD)}
    \Input & An \stdag, $G$ with an integral flow~$f$, an integer $k$, and a set $\weightset \subset \Z$. \\
    \Prob  & Does $(G,f)$ have a flow decomposition into $k$ paths whose weights are all members of $\weightset$?
\end{problem}
\begin{problem}{\WPUFDfull (\WPUFD)}
    \Input & An \stdag, $G$, integral flow $f$,
    and $k$ integers~$\pweights=(w_1, \ldots , w_k)$ taken from $\weightset \subset \Z$. \\
    \Prob & Is there a flow decomposition of $(G,f)$ into $k$ paths with respective weights $\pweights$?
\end{problem}

\begin{lemma}
    \WPxFDfull{1,2,4} is \NP-complete.
\end{lemma}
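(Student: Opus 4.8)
The plan is to show membership in \NP and then prove hardness by reducing from \FD restricted to flow values in~$\{1,2,4\}$, which is \NP-complete~\cite{HowToSplit,greedy}. Membership is immediate: a flow decomposition into $k$ \stpaths carrying the prescribed weights~$\pweights$ is a polynomial-size certificate, and one checks in polynomial time that each~$p_i$ is an \stpath of weight~$w_i$ and that $f(a)=\sum_i w_i\mathbf 1_{p_i}(a)$ holds for every arc~$a$.

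The engine of the hardness argument is that fixing the number of paths to the ``width'' $\max_i|A^+(S_i)|$ of the graph determines the entire weight multiset. Indeed, since all flow values are positive, every arc lies on at least one path of any decomposition, and since every \stpath meets every \TOcut exactly once, every decomposition uses at least $\max_i|A^+(S_i)|$ paths; a decomposition attaining this bound must place its paths in bijection with the arcs of a largest \TOcut~$C$ (as many paths as arcs, each arc used at least once, one crossing per path), so each path saturates its arc of~$C$ and hence has weight equal to the flow there. I would therefore reduce an instance $(G,f,k)$ of this restricted \FD to one whose width equals~$k$, set $\pweights := (f(a))_{a\in C}$ for a largest \TOcut~$C$, which is a $k$-tuple over $\{1,2,4\}$, and output $(G,f,\pweights)$ as an instance of \WPxFD{1,2,4}: a decomposition realising the weights~$\pweights$ is in particular one into at most~$k$ paths, and conversely any decomposition into at most~$k$ paths must use exactly~$k$ and therefore realise~$\pweights$.

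Two bookkeeping points remain. First, ``at most~$k$'' and ``exactly~$k$'' paths coincide here: any path of weight at least~$2$ splits along its route into two lighter paths, so -- except in the trivial case $k>F$, where the answer is always yes -- a decomposition into at most~$k$ paths exists iff one into exactly~$k$ does. Second, I must bring the instance to width exactly~$k$ without leaving~$\{1,2,4\}$; the cleanest route is to start from the instances output by the known hardness reduction, which can be taken already to contain a \TOcut of size exactly~$k$, so that~$\pweights$ is read off directly from its flow labels. (For an arbitrary instance one would instead insert a width-$k$ layer just after~$s$ and pad with weight-one \stpaths, matching the new flow values to the weights of an optimal decomposition of the original flow.)

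The main obstacle is precisely this matching: the source problem constrains only the \emph{number} of paths, whereas the target prescribes their \emph{weights}, so the reduction must manufacture a width-$k$ \TOcut whose $\{1,2,4\}$-valued labels are consistent with an optimal decomposition -- and because an optimal decomposition need not be unique, keeping both directions of the equivalence intact requires care. The remaining obligations -- NP membership, that the augmented graph is still an \stdag carrying a valid flow, and that prescribed-weight decompositions correspond exactly to at-most-$k$-path decompositions -- are routine.
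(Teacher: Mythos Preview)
Your route is genuinely different from the paper's, and as written it has a gap.

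The paper does not attempt a many-one reduction at all. It starts from an instance $(G,f,k)$ of \xFD{1,2,4} (NP-complete by~\cite{HowToSplit}), notes that every path weight in a solution lies in $\{1,2,4\}$, and then simply \emph{guesses} how many of the $k$ paths carry each of the three values. There are at most $k^3$ such guesses, and each one yields an instance of \WPxFD{1,2,4}; the original is positive iff some guess is. This is a polynomial-time Turing reduction, and it needs no structural hypothesis on the source instance whatsoever.

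Your argument, by contrast, tries to pin down the weight multiset uniquely via a width-$k$ \TOcut. The observation is correct: if some \TOcut has exactly $k$ arcs, any $k$-path decomposition puts the paths in bijection with those arcs and the weights are forced. But you then need every hard instance to have such a cut. Your route (i), asserting that the instances produced by the reduction in~\cite{HowToSplit} already have width $k$, is plausible but unverified---you would have to open that construction and check. Your route (ii), inserting a width-$k$ layer ``matching the new flow values to the weights of an optimal decomposition of the original flow,'' is circular: it presupposes knowledge of the very weight multiset the reduction is meant to produce. You correctly flag this as ``the main obstacle'' and say it ``requires care,'' but you never discharge it; the proposal stops at the point where the real work would begin.

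The paper's Turing reduction sidesteps the obstacle entirely by enumerating the polynomially many candidate multisets instead of determining one. If you want to salvage a many-one reduction, your width-$k$ idea is the right lever, but you must either actually inspect the cited hardness construction or devise a padding gadget whose flow labels do not depend on the unknown solution.
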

\begin{proof}
    Consider an instance~$(G,f,k)$ of \xFD{1,2,4}, which is \NP-    complete~\cite{HowToSplit}. Note that every \stpath in a potential
    solution to such an instance can only take weights in~$\{1,2,4\}$. This
    enables us to Turing-reduce~$(G,f,k)$ to at most~$k^3$ instances of
    \WPxFD{1,2,4}: we simply guess how many
    of the~$k$ \stpaths use each of the three possible values.
    It follows that \WPxFD{1,2,4} is \NP-complete.
\end{proof}

\noindent
In order to show that \kFD does not admit polynomial kernels, we
will provide a \emph{cross-composition} from \WPxFD{1,2,4}.
We first need the following technical definition to set up the necessary machinery:

\begin{definition}[Polynomial equivalence relation~\cite{CrossComp}]
    An equivalence relation $\mathcal R$ over~$\Sigma^*$ is called a \emph{polynomial equivalence relation}
    if the following hold:
    \begin{enumerate}
        \item there exists an algorithm that decides for~$x,y \in \Sigma^*$ whether $x$ and~$y$ are
              equivalent under~$\mathcal R$ in time polynomial in~$|x| + |y|$, and
        \item for any finite set~$S \subseteq \Sigma^*$ the index~$| S / \mathcal R |$ is
              bounded by a polynomial in~$\max_{x \in S} |x|$.
    \end{enumerate}
\end{definition}

\noindent
The benefit of a polynomial equivalence relation is that we can focus on
collection of instances which share certain characteristics, as long
as these characteristics do not distinguish too many instances. A simple example is
that we can ask for input instances~$(G_i,f_i,k_i)$ in which all graphs~$G_i$ have
the same number of vertices and the values~$k_i$ are the same.

\begin{definition}[AND-cross-composition~\cite{CrossComp}]\label{def:and-cross-comp}
    Let~$L \subset \Sigma^*$ be a language, $\mathcal R$ a polynomial
    equivalence relation over~$\Sigma^*$, and let~$\Pi \subseteq \Sigma^* \times \N$
    be a parameterized problem. An AND-cross composition of~$L$ into~$\Pi$ (under~$\mathcal R$) is an
    algorithm that, given $\ell$ instances~$x_1,\ldots x_t \in \Sigma^*$ of~$L$ belonging to the
    same equivalence class of~$\mathcal R$, takes time polynomial in~$\sum_{i=1}^\ell |x_i|$ and
    outputs an instance~$(y,k) \in \Sigma^* \times \N$ such that
    \begin{enumerate}[label=\alph*)]
        \item the parameter~$k$ is polynomially bounded in~$\max_{1 \leq i \leq \ell}|x_i| + \log \ell$, and
        \item we have that~$(y,k) \in \Pi$ if and only if \emph{all} instances~$x_i \in L$.
    \end{enumerate}
\end{definition}

\noindent
We will now use the following theorem (abridged to our needs here) and the subsequent
AND-cross-composition to prove that \kFD does not admit small kernels.

\begin{proposition}[Bodlaender, Jansen, Kratsch~\cite{CrossComp}]\label{thm:no-poly-kernel}
    If an \NP-hard language~$L$ AND-cross-composes into a parameterized problem~$\Pi$,
    then $\Pi$ does not admit a polynomial kernelization unless $\NP \subseteq \coNP / \emph{poly}$
    and the polynomial hierarchy collapses.
\end{proposition}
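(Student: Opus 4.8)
The statement is the AND-counterpart of the classical kernelization-lower-bound framework, and the plan is to reduce it to the (deep) ``no-distillation'' theorems, which I would treat as a black box: Fortnow and Santhanam showed that an \NP-hard language has no OR-distillation unless $\NP\subseteq\coNP/\mathrm{poly}$, and Drucker later proved the AND-analogue. Assume for contradiction that $\Pi$ has a polynomial kernelization $\mathcal K$. Starting from the AND-cross-composition of $L$ into $\Pi$ (Definition~\ref{def:and-cross-comp}) together with $\mathcal K$, I would build an \emph{AND-distillation of $L$}: a polynomial-time algorithm that maps any $t$ instances $x_1,\dots,x_t$ of $L$, with $n:=\max_i|x_i|$, to a single string of length $\mathrm{poly}(n+\log t)$ lying in a fixed (recursive) language iff \emph{all} the $x_i$ lie in $L$. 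Because $\mathrm{poly}(n+\log t)=t^{o(1)}$ for fixed $n$ and $t\to\infty$, such an algorithm is precisely the kind of compression ruled out by Drucker's theorem unless $L\in\coNP/\mathrm{poly}$; since $L$ is \NP-hard and $\coNP/\mathrm{poly}$ is closed under Karp reductions, this forces $\NP\subseteq\coNP/\mathrm{poly}$, whence the polynomial hierarchy collapses to its third level by a Karp--Lipton/Yap-style argument.

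To build the distillation I would proceed in three steps. (i) Partition $x_1,\dots,x_t$ into the equivalence classes $C_1,\dots,C_m$ of the polynomial equivalence relation $\mathcal R$ that the cross-composition uses; this runs in polynomial time and, by the definition of a polynomial equivalence relation, $m\le q(n)$ for a fixed polynomial $q$. The key observation is that $\bigwedge_i(x_i\in L)$ holds iff $C_j\subseteq L$ for every $j$. (ii) Apply the AND-cross-composition to each class $C_j$ (its members share an $\mathcal R$-class, as required) to obtain a $\Pi$-instance $(y_j,k_j)$ with $(y_j,k_j)\in\Pi$ iff $C_j\subseteq L$ and $k_j\le\mathrm{poly}(n+\log|C_j|)\le\mathrm{poly}(n+\log t)$, then apply $\mathcal K$ to replace it by an equivalent instance $(y_j',k_j')$ of size $\mathrm{poly}(k_j)\le\mathrm{poly}(n+\log t)$. (iii) Output the concatenation $(y_1',k_1')\#\cdots\#(y_m',k_m')$ and take as target language ``every block is a yes-instance of $\Pi$''. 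This string has length $m\cdot\mathrm{poly}(n+\log t)=\mathrm{poly}(n+\log t)$, and it lies in the target language iff every $C_j\subseteq L$, i.e.\ iff all $x_i\in L$, as desired. (One may instead route each block through a Karp reduction of the classical version of $\Pi$ into \textsc{Sat}, making the target language an \NP-language; this is convenient if one wants the \NP-target form of the distillation theorem, and it works verbatim since $\Pi\in\NP$ in all intended applications, including \kFD.)

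The only routine checking is that no step reintroduces a dependence on $t$ worse than $\log t$: the cross-composition injects $\log t$ only into the \emph{parameter}, the kernel keeps the instance \emph{size} polynomial in that parameter, and the number of blocks $m$ depends on $n$ alone, so composing polynomials in $n+\log t$ stays polynomial in $n+\log t$. The genuinely hard ingredient — which I would not attempt to reprove — is the distillation lower bound itself: that an \NP-hard language admitting such a compression collapses the hierarchy. For the OR-case this is the information-theoretic argument of Fortnow and Santhanam; the AND-version needed here was conjectured by Bodlaender, Downey, Fellows and Hermelin and settled only later by Drucker via a considerably more delicate ``disguising distribution'' technique. Granting that black box, the argument above is exactly the plumbing underlying the cited result of~\cite{CrossComp}.
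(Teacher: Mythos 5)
This proposition is imported verbatim from Bodlaender, Jansen and Kratsch~\cite{CrossComp}; the paper gives no proof of it, so there is nothing internal to compare your argument against. Judged on its own terms, your sketch is a faithful reconstruction of the standard argument: assume a polynomial kernel, partition the $t$ inputs into the $\mathrm{poly}(n)$ many $\mathcal R$-classes, cross-compose each class, kernelize each resulting $\Pi$-instance down to size polynomial in the parameter, concatenate, and observe that the result is an AND-distillation of the \NP-hard language $L$; the collapse then follows from the distillation lower bound, which for the AND case is Drucker's theorem (the OR analogue being Fortnow--Santhanam). You correctly identify that Drucker's result is the genuinely hard ingredient and that it must be black-boxed --- this is exactly how the cited paper proceeds, since the AND case was originally only conditional on the ``AND-conjecture'' until Drucker resolved it.

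One small refinement worth noting: rather than arguing that $\mathrm{poly}(n+\log t)=t^{o(1)}$ and invoking a relaxed form of the compression bound, the cleaner (and standard) route is to first discard duplicate inputs. Since duplicates do not change the conjunction $\bigwedge_i (x_i\in L)$, one may assume all $x_i$ are distinct, whence $t\le |\Sigma|^{n+1}$ and $\log t = O(n)$; the concatenated output then has size $\mathrm{poly}(\max_i |x_i|)$ outright, which is the form of AND-distillation to which the lower-bound theorems apply most directly. Also, since Drucker's theorem holds for distillation into an arbitrary target language, your fallback of routing each block through a reduction to \textsc{Sat} is not needed, though it is harmless. With those cosmetic points aside, the plumbing is correct and complete modulo the stated black box.
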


\begin{theorem}\label{thm:kfd_nopolyker}
    \kFDfull does not admit
    a polynomial kernel unless $\NP \subseteq \coNP / \emph{poly}$ and the polynomial hierarchy collapses.\looseness-1
\end{theorem}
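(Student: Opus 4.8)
The plan is to exhibit an AND-cross-composition from \WPxFD{1,2,4} into \kFD and then invoke Proposition~\ref{thm:no-poly-kernel}; since \WPxFD{1,2,4} is \NP-complete by the preceding lemma, this gives the theorem. As the polynomial equivalence relation $\mathcal{R}$ I would take one class for all malformed strings, and otherwise declare two well-formed instances of \WPxFD{1,2,4} equivalent exactly when their weight vectors agree as multisets (equivalently, when they have the same number of paths $k'$ and the same number of $1$s, $2$s and $4$s). Deciding $\mathcal{R}$-equivalence is clearly polynomial, and a finite set $S$ has at most $1 + O((\max_{x\in S}|x|)^3)$ classes, so $\mathcal{R}$ is a polynomial equivalence relation. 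The point of this grouping is that all instances in one class share a single weight multiset $\pweights$, and hence a single total flow $F = w_1 + \dots + w_{k'}$, which is what later allows us to route one common set of weights through every instance simultaneously.

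Next I would construct a \emph{weight-enforcement gadget} $W(\pweights)$: a spine $s = u_0 \to u_1 \to \dots \to u_{k'} = t$ together with extra arcs $u_j t$ for $1 \le j \le k'-1$, where arc $u_{j-1}u_j$ carries flow $\sum_{i \ge j} w_i$ and arc $u_j t$ carries flow $w_j$. One checks this is an \stdag carrying a valid (positive, integral) flow of total value $F$. Every \stpath of $W$ has the form $u_0 u_1 \cdots u_j t$ for exactly one index $j$ (with $j = k'$ the full spine), so each of the $k'$ ``exit'' arcs $u_1 t, \dots, u_{k'-1}t, u_{k'-1}u_{k'}$ must carry at least one path and each path uses exactly one exit. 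Hence any flow decomposition of $W$ into at most $k'$ paths uses exactly $k'$ paths, one per exit, and the weight of the path through an exit is forced to equal that exit's flow value; so such a decomposition exists, and every such decomposition realizes precisely the weight multiset $\{w_1,\dots,w_{k'}\}$.

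Given $\ell$ instances $(G_1,f_1,\pweights),\dots,(G_\ell,f_\ell,\pweights)$ from one $\mathcal{R}$-class, the output $G$ is the \emph{series join} of $W(\pweights), G_1, \dots, G_\ell$: identify the sink of each piece with the source of the next and take the union of all flows. Since every piece has total flow $F$, this is a consistent integral flow on an \stdag, and I set $k := k'$. If all $(G_i,f_i,\pweights)$ are yes-instances, then for each $j$ pasting the weight-$w_j$ path of $W$ and of every $G_i$ into one \stpath yields $k'$ paths covering $f$, so $(G,f,k)\in\kFD$. Conversely, a decomposition of $(G,f)$ into at most $k'$ paths restricts, on the $W$-part, to a decomposition of $W$ into at most $k'$ paths; by the gadget property it uses exactly $k'$ paths with weight multiset $\{w_1,\dots,w_{k'}\}$, and restricting the same paths to each $G_i$ gives a decomposition of $(G_i,f_i)$ into $k'$ paths with that multiset, i.e.\ a yes-certificate for $(G_i,f_i,\pweights)$. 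The construction runs in time polynomial in $\sum_i|x_i|$, and $k = k' \le \max_i|x_i|$ is polynomially bounded in $\max_i|x_i| + \log\ell$, so this is a valid AND-cross-composition and Proposition~\ref{thm:no-poly-kernel} concludes the proof.

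I expect the main obstacle to be the analysis of the weight-enforcement gadget --- specifically, arguing that \emph{every} decomposition into at most $k'$ paths (not merely a smallest one) is pinned down to the multiset $\pweights$ --- together with the routine but careful bookkeeping needed to verify that the series join is a genuine \stdag carrying a consistent integral flow and that the two directions of the equivalence line up weight-by-weight across the pieces.
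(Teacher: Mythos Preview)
Your approach is essentially the paper's: an AND-cross-composition from \WPxFD{1,2,4} into \kFD obtained by chaining the input instances in series together with a gadget that pins down the weight multiset of any size-$k'$ decomposition to $\{w_1,\dots,w_{k'}\}$. The paper's gadget is simpler than yours---just two vertices joined by $k'$ parallel arcs carrying flows $w_1,\dots,w_{k'}$---but your spine-with-exits construction works for the same reason (the $k'$ in-arcs of $t$ form a cut that every \stpath crosses exactly once), so this is only a cosmetic difference.

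There is one actual slip. You write that instances sharing the multiset $\pweights$ ``hence'' share the total flow $F = \sum_j w_j$, and later that ``every piece has total flow $F$''. That does not follow: the total flow of $(G_i,f_i)$ is determined by $f_i$, not by $\pweights$, and a no-instance of \WPxFD{1,2,4} may perfectly well have total flow different from $\sum_j w_j$. In that case your series join is not a valid flow at the identified cut vertices. The paper patches this in one line: before composing, check whether every $F_i$ equals $\sum_j w_j$; if some $F_i$ differs, that $x_i$ is trivially a no-instance, so the AND is ``no'' and you may output any fixed negative instance of \kFD. With this fix your argument goes through.
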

\begin{proof}
    Let $\mathcal R_w$ be the equivalence relation on instances of \WPxFD{1,2,4}
    where $(G_1,f_1, \pweights_1) \equiv (G_2,f_2,\pweights_2)$ if and only if
    $\pweights_1 = \pweights_2$. Since each entry of $\pweights_i$ is in
    $\{1,2,4\}$, $\mathcal R_w$ has at most $k^3$ equivalence classes, and is
    a polynomial equivalence relation.

    Let~$x_1,\ldots,x_\ell$ be instances of \WPxFD{1,2,4} all contained
    in the same equivalence class of~$\mathcal R_w$, with
    $x_i = (G_i,f_i,\mathbf{w})$ and $\mathbf{w} = (w_1, \ldots, w_k)$
    the common prescribed path weights. We denote the source and sink
    of $G_i$ by $s_i$ and $t_i$, respectively.

    We construct an additional instance~$x_{\ell +1}$ as follows:
    $G_{\ell+1}$ consists of two vertices~$s_{\ell+1}, t_{\ell+1}$, and
    $k$ arcs~$a_1,\ldots,a_k$ from $s_{\ell+1}$ to $t_{\ell+1}$.
    The flow~$f_{\ell+1}$ has value $w_i$ on arc $a_i$.
    By construction~$x_{\ell+1} = (G_{\ell+1}, f_{\ell+1}, \mathbf{w})$
    is a positive instance of \WPxFD{1,2,4};
    moreover, it has a unique decomposition into $k$ \stpaths (up to isomorphism).

    Before we describe the composition, we treat a technicality.
    If the total flow~$F_i$ for any $f_i$ is different from~$\sum_{j = 1}^k w_j$,
    then $x_i$ is a negative instance. In this case, instead of a composition
    we output a trivial negative instance $y$ for \kFD.
    Otherwise, we compose the instances~$x_1,\ldots,x_{\ell+1}$ into a single
    composite instance~$y = (G,f,k)$ of \kFD.
    To form $G$, we chain the $G_i$s together
    by identifying the vertex~$t_i$ with the vertex~$s_{i+1}$ for~$1 \leq i \leq \ell$.
    The resulting $G$ is an \stdag with source~$s = s_1$ and sink~$t = t_{\ell+1}$.
    We define $f$ to label each arc in $G$ with the flow value from its original instance.
    Since each $x_i$ has the same total flow, $f$ is a flow on $G$.

    We point out that property a) of Definition~\ref{def:and-cross-comp} is trivially satisfied.
    To see that property b) holds, first assume that all instances~$x_i$ are positive.
    Since all these solutions consist of $k$ \stpaths with the same prescribed values~$w_1,\ldots,w_k$,
    we can chain the individual flow decompositions together into~$k$ \stpaths in~$G$. Accordingly,
    $y$ is then a positive instance. In the other direction, assume that $y$ has a solution, \ie~$f$
    can be split into exactly~$k$ \stpaths. Due to our inclusion of the instance~$x_{\ell+1}$ in
    the construction of~$y$, the respective values of the \stpaths must be exactly~$w_1,\ldots,w_k$.
    But then restricting this global solution to each individual instance~$x_i$ (since all \stpaths meet
    at the identified source/sink cut vertices) produces a solution. We conclude that therefore all~$x_i$
    must have been positive instances, as required

    Finally, our construction clearly takes time polynomial in~$\sum_{i=1}^\ell |x_i|$ making
    it an AND-cross-composition of \WPxFD{1,2,4} into
    \kFD. Invoking Theorem~\ref{thm:no-poly-kernel}, this proves that
    \kFD does not admit a polynomial kernel unless $\NP \subseteq \coNP / \emph{poly}$.
\end{proof}

\noindent
We note that our construction in the proof of Theorem~\ref{thm:kfd_nopolyker} produces
an instance of \UFD, which can easily be Turing reduced into an instance of \WPUFD.

\begin{corollary}
    Unless $\NP \subseteq \coNP / \emph{poly}$, the problems \UFDfull and \WPUFDfull
    do not admit polynomial kernels.
\end{corollary}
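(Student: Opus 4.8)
The plan is to reuse the AND-cross-composition from the proof of Theorem~\ref{thm:kfd_nopolyker} essentially unchanged, after noticing that it already targets \UFD, and then to adjust it so that it also targets \WPUFD. Recall that the source language there is \WPxFD{1,2,4}, which is \NP-complete, and that by Proposition~\ref{thm:no-poly-kernel} any AND-cross-composition from an \NP-hard language into a parameterized problem rules out a polynomial kernel (unless the polynomial hierarchy collapses). So it suffices to exhibit, for each of the two target problems, an AND-cross-composition from \WPxFD{1,2,4} satisfying conditions a) and b) of Definition~\ref{def:and-cross-comp}.

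For \UFD this is immediate. The composite instance $y = (G,f,k)$ built in the proof of Theorem~\ref{thm:kfd_nopolyker} has the property that every size-$k$ flow decomposition of $f$ can only use weights from $\{1,2,4\}$ --- this is precisely why the appended star gadget $x_{\ell+1}$ is a positive instance of \WPxFD{1,2,4}. Hence $(G,f,k)$ together with the weight set $\weightset = \{1,2,4\}$ is a bona fide instance of \UFD, and the correctness chain ``$(y,k) \in \kFD$ $\iff$ $f$ splits into exactly $k$ \stpaths $\iff$ all $x_i$ are positive'' is verbatim a statement about \UFD. Conditions a) and b) carry over unchanged: the parameter is still $k$ (with $|\weightset|$ the constant $3$) and $k \le |x_i|$ for each $i$. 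Proposition~\ref{thm:no-poly-kernel} then gives the claimed lower bound for \UFD.

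For \WPUFD I would sharpen the above observation: because the gadget $x_{\ell+1}$ consists of $k$ parallel arcs from $s$ to $t$ carrying exactly the flow values $w_1,\dots,w_k$, the \emph{multiset} of path weights of any size-$k$ decomposition of $y$ must equal $\{w_1,\dots,w_k\}$. Therefore prescribing the weight vector $\pweights=(w_1,\dots,w_k)$ does not alter the solution set, and outputting $(G,f,\pweights)$ (with entries drawn from $\weightset=\{1,2,4\}$) yields an instance of \WPUFD that is positive iff $f$ splits into exactly $k$ \stpaths, which by the argument of Theorem~\ref{thm:kfd_nopolyker} holds iff all $x_i$ are positive. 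The parameter $k$ is just the length of $\pweights$, again bounded by $\max_i |x_i|$, so this is an AND-cross-composition of \WPxFD{1,2,4} into \WPUFD, and Proposition~\ref{thm:no-poly-kernel} finishes the proof.

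The point requiring care is that one should \emph{not} instead Turing-reduce a \UFD instance to the $O(k^2)$ many \WPUFD instances obtained by guessing which size-$k$ multiset over $\{1,2,4\}$ the decomposition uses: a Turing reduction does not in general transfer the absence of polynomial kernels. Performing the cross-composition directly into \WPUFD sidesteps this, and the only new ingredient needed is the uniqueness of the weight multiset forced by the star gadget $x_{\ell+1}$ --- so in effect Theorem~\ref{thm:kfd_nopolyker} already did all the work, and the remaining task is just to verify that its construction lands in these more restrictive problems.
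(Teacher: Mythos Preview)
Your approach is correct and matches the paper's: both observe that the cross-composition in Theorem~\ref{thm:kfd_nopolyker} already outputs an instance of \UFD (and, since the gadget $x_{\ell+1}$ pins down the weight multiset, also of \WPUFD). Your explicit warning against Turing reductions is apt---the paper's one-line justification in fact says the \UFD instance can be ``Turing reduced'' to \WPUFD, but as you correctly argue, what makes the argument go through is that the gadget forces a \emph{unique} weight vector, so this is really the same many-one cross-composition with $\pweights$ appended to the output rather than a genuine Turing reduction.
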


\section{Hardness of assigning weights}\label{sec:exact3}

Every solution computed by \toboggan in our experiments corresponded to a fully
determined linear system in the final dynamic programming table. This means
that we never had to run the expensive ILP solver to determine the weights;
instead, they were computed in polynomial time with respect to~$k$ using row
reduction. This raises the question: is the linear system obtained from a
decomposition into paths always fully determined? In the following we show that the
answer must be `no' in case of \emph{non-optimal} decompositions. Not only
can there be multiple weight-assignments for the same set of paths,
recovering these weights is actually \NP-hard.
Formally, we consider the problem \PPFDfull.

\begin{problem}{\PPFDfull (\PPFD)}
    \Input & An \stdag~$G$ with an integral flow~$f$,
             and a prescribed set of \stpaths, $P = p_1, \cdots, p_k$.\\
    \Prob  & Identify integral weights~$\pweights = (w_1,\dots w_k)$
             such that $(P, \pweights)$ is a flow decomposition of $G$.
\end{problem}

\noindent
Our proof that \PPFD is \NP-complete relies on a reduction from \XHSfull,
which is equivalent to monotone 1-in-3-SAT and known to be \NP-hard~\cite{1in3SatHardness}.

\begin{problem}{\XHSfull (\XHS)}
    \Input & A finite universe $U = \{u_1, \cdots, u_n \}$,
    and a collection $\mathcal{S} \subseteq \binom{U}{3}$ of subsets of $U$ of size 3.  \\
    \Prob  &  Find a subset $X \subseteq U$ such that
    each element of $\mathcal{S}$ intersects (``hits'') $X$ \emph{exactly} once,
    i.e. $\forall S \in \mathcal{S}, |S\cap X| = 1$.
\end{problem}

 \begin{theorem}\label{thm:x3h-reduction}
     \PPFDfull is \NP-complete.
 \end{theorem}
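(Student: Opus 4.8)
The plan is to show the decision version (``does a valid integer weighting exist?'') lies in \NP, and then to reduce \XHSfull\ to it. Membership is immediate: a weight vector~$\pweights$ is a polynomial-size certificate, since in any flow decomposition with prescribed path set~$P$ each weight~$w_i$ is at most the flow value on any arc of~$p_i$, hence at most the largest flow value of the input, and checking whether $(P,\pweights)$ is a flow decomposition only requires summing, per arc, the weights of the paths routed over it.

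For hardness I reduce from \XHSfull, which is equivalent to monotone 1-in-3-SAT and \NP-hard~\cite{1in3SatHardness}. Given a universe $U=\{u_1,\dots,u_n\}$ and triples $\mathcal S = \{C_1,\dots,C_m\}$, I construct an \stdag~$G$, a flow~$f$, and a set~$P$ of $2n$ \stpaths, namely an \emph{element path}~$p_i$ and a \emph{complement path}~$\bar p_i$ for each~$u_i$. The construction will force every integer solution to satisfy $w_i + \bar w_i = 3$ with positive weights, hence $w_i \in \{1,2\}$; we read $w_i = 2$ as ``$u_i \in X$''. The graph is a left-to-right chain of gadgets, each gadget a bundle of parallel lanes between a single entry node and a single exit node (consecutive gadgets sharing these nodes). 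First comes a \emph{pairing region} whose $i$-th lane carries exactly~$p_i,\bar p_i$ on an arc of flow value~$3$, imposing $w_i + \bar w_i = 3$. Then, for each clause $C_\ell = \{u_a,u_b,u_c\}$, a \emph{clause gadget} with a \emph{check lane} carrying exactly~$p_a,p_b,p_c$ (flow value~$4$), a \emph{bypass lane} carrying exactly~$\bar p_a,\bar p_b,\bar p_c$ (flow value~$5$), and a \emph{pass lane} carrying~$p_j,\bar p_j$ (flow value~$3$) for every~$j$ with $u_j \notin C_\ell$; the arcs entering and leaving each gadget carry all $2n$ paths with flow value~$3n$. (Subdivide each lane arc once if a simple DAG is required.) The construction is polynomial time, and $f$ respects conservation for \emph{any} weights obeying $w_i+\bar w_i=3$: for instance a bypass lane sees $\bar w_a + \bar w_b + \bar w_c = 9 - (w_a+w_b+w_c)$, which equals the prescribed~$5$ precisely when the check lane's constraint $w_a+w_b+w_c = 4$ holds.

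For correctness, from an exact hitting set~$X$ set $w_i = 2$ for $u_i \in X$ and $w_i = 1$ otherwise, with $\bar w_i = 3 - w_i$: every lane constraint is satisfied---a check lane of~$C_\ell$ sees $2+1+1 = 4$ since~$X$ hits~$C_\ell$ exactly once---so $(P,\pweights)$ is a flow decomposition with positive integer weights. Conversely, in any integer flow decomposition with path set~$P$ the weights are positive, so the pairing lanes force $w_i \in \{1,2\}$; taking $X = \{u_i : w_i = 2\}$, the check lane of each~$C_\ell$ forces $w_a + w_b + w_c = 4$, and three values from~$\{1,2\}$ sum to~$4$ only when exactly one equals~$2$, i.e.\ $X$ hits~$C_\ell$ exactly once. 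Hence $(G,f,P)$ is a yes-instance iff $(U,\mathcal S)$ is, and together with membership in \NP\ this proves \PPFD\ is \NP-complete.

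The step I expect to be the main obstacle---and which shapes the whole construction---is that $f$ must be fixed \emph{before} the solution is known: no arc may carry a single path of undetermined weight, and every prescribed flow value must be forced by conservation regardless of the eventual weights. The complement paths are what make this work: bundling~$p_i$ with~$\bar p_i$ keeps their joint contribution at the constant~$3$ wherever they travel together, and they are separated only at clause checks, where the check value~$4$ and the dual bypass value~$5 = 9 - 4$ are again constants independent of the assignment. The one modelling assumption I lean on is the paper's convention that flow decompositions use positive weights; dropping it would make $w_i + \bar w_i = 3$ compatible with, say, $w_i = 0$, and the clause gadgets would then no longer encode exact hitting.
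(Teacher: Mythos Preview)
Your proof is correct and essentially identical to the paper's: both reduce from \XHSfull\ by introducing partner paths $p_i,\bar p_i$ whose weights are forced to sum to~$3$ on a shared flow-$3$ arc, and then encode each triad via a flow-$4$ ``check'' arc carrying $p_a,p_b,p_c$ together with a complementary flow-$5$ arc carrying $\bar p_a,\bar p_b,\bar p_c$, with remaining pairs routed on flow-$3$ arcs. The only cosmetic differences are that the paper uses the out-arcs of~$s$ directly as your ``pairing region'' (rather than a separate gadget) and omits the explicit \NP-membership argument and the positivity caveat you spell out.
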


\begin{proof}
  Given an instance $(U, \mathcal S)$ of \XHS, we will construct an equivalent instance of \PPFD.
  Our approach is to encode each element of $U$ using two ``partner'' \stpaths in an \stdag, $G$,
  whose weights sum to 3, and each triad in $\mathcal{S}$ as a different \TOcut in $G$.
  We will route the partner paths and assign flow values so that the set of \stpaths
  with weight 2 exactly correspond to elements in a solution to the hitting set problem.

  We first construct $G$. Let $V(G) = \{s, v_1, \ldots, v_{|\mathcal{S}|}, t = v_{|\mathcal{S}|+1}\}$,
  where $v_i$ is associated with the $i$th triad in $\mathcal{S}$ for $1 \leq i \leq |\mathcal{S}|$,
  and $s$/$t$ are source/sink vertices. Our construction will be such that each vertex other than $t$
  has exactly one out-neighbor: $s$ has out-neighbor $v_1$ and $v_i$ has out-neighbor $v_{i+1}$.
  Create one $sv_1$ arc with flow value 3 for each element of $U$, and
  $(|U|-1)$ $v_iv_{i+1}$ arcs---one each of flow values 4 and 5, and $|U|-3$ with flow value 3.
  We now define a set of prescribed paths $P = \{p_Y, \bar{p}_Y\}_{Y\in U}$.
  For each element $Y \in U$, the corresponding partner \stpaths $p_Y$ and $\bar{p}_Y$ are
  routed over the same arc out of $s$. This guarantees that each pair of partner paths must receive weights
  summing to 3. At $v_i$, we route these paths to encode the corresponding triad $S_i = \{u_1, u_2, u_3\}$
  as follows. The paths $p_{u_1}$, $p_{u_2}$, and $p_{u_3}$ are routed over the arc with flow 4
  and $\bar p_{u_1}$, $\bar p_{u_2}$, and $\bar p_{u_3}$ go over the arc of flow 5.
  Now assign each element $u' \in U\setminus S_i$ to a unique $v_iv_{i+1}$ arc $a_{u'}$ of flow value 3
  and route $p_{u'}$ and $\bar p_{u'}$ together over $a_{u'}$.
  This construction is illustrated in Figure~\ref{fig:x3h-reduction}.

  To complete the proof, we describe how a solution to this instance of \PPFD yields
  a solution to the original instance of \XHS. Consider the possible values of the weights
  of the paths in $P$. By design, the out-arcs of $s$ force each pair of partner paths
  to have one of weight 1 and one of weight 2.
  Each triad $S_i$ is represented by two arcs out of $v_i$: one with flow value 4 ($a_i$),
  and the other with 5 ($\bar{a_i}$). Because exactly three prescribed \stpaths
  are routed over $a$, and all our paths must have weight 1 or 2, this
  guarantees that exactly one \stpath routed over $a$ has weight 2 (and the other two must have weight 1).
  Thus, finding a set of weights that solve this instance of \PPFD is equivalent to
  choosing a set of paths to have weight 2 such that exactly one selected path
  is routed over each arc with flow value 4. But this is equivalent to choosing a set of elements
  (paths) such that each triad (arc) is hit by (incident to) exactly one of the chosen elements (\stpaths of weight 2).
  \looseness-1
\end{proof}

\begin{figure}[t]
    \centering
    \includegraphics[scale=.7]{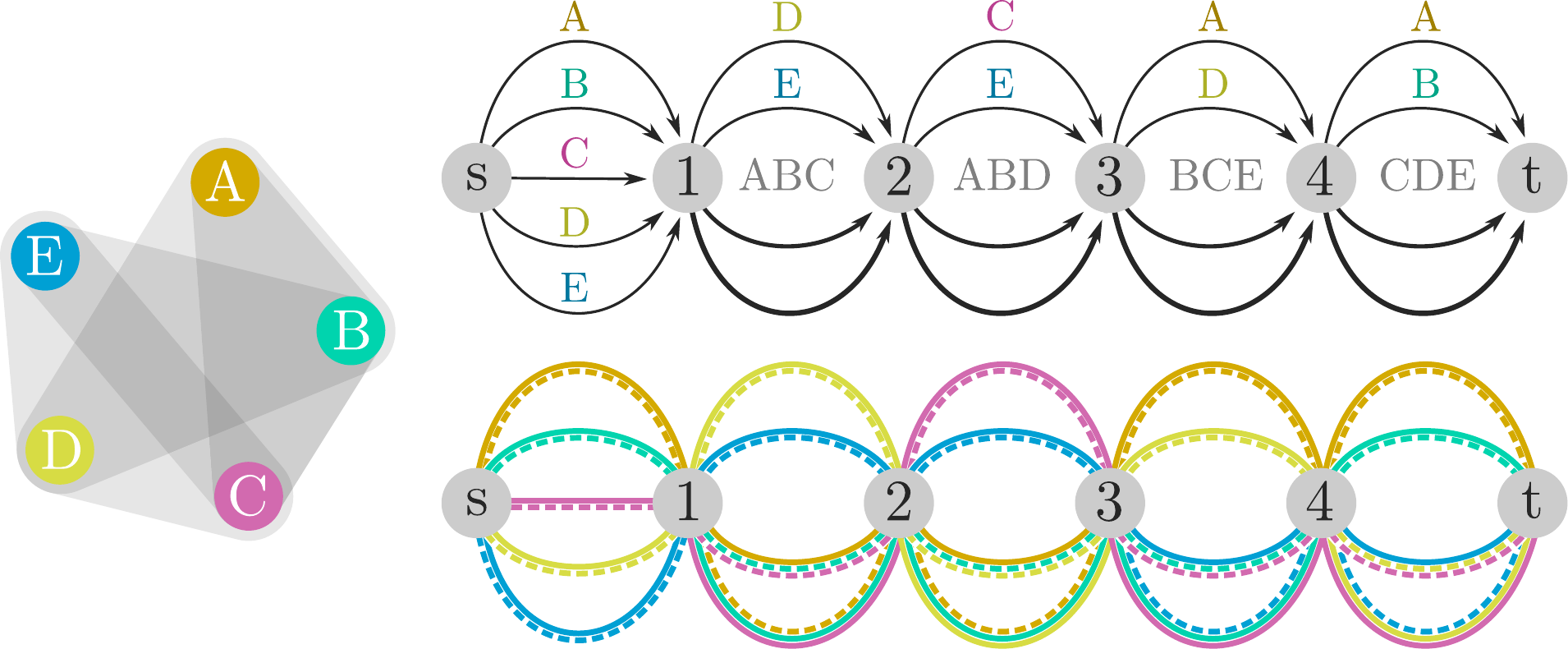}
    \caption{\label{fig:x3h-unique}\label{fig:x3h-reduction}
        (\textit{Left})
        Instance of \XHS with unique solution $\{A,E\}$.
        (\textit{Right})
        The \stdag from our reduction, and a flow decomposition corresponding to the solution~$\{A,E\}$.
        Light arcs have weight~$3$, medium~$4$, and heavy~$5$; dashed paths have weight~$1$ and solid paths~$2$.
    }
\end{figure}

\noindent
In the above reduction, we looked for a decomposition with $2\cdot|U|$ prescribed paths.  However, if we do not prescribe paths, there is a flow decomposition with $|U|+1$ paths
($|U|-1$ of weight three, one of weight two, and one of weight one) and a
corresponding linear system of full rank\footnote{This is true regardless of
whether $(U,\mathcal{S})$ is a a yes-instance.}. As such, it is possible that
\PPFD is not difficult when the prescribed paths on a yes-instance are part of an optimal decomposition.

\begin{conjecture}\label{conjecture}
  If $k$ is the minimum value for which $(G,f)$ has a flow decomposition of
  size $k$, then every integer-weighted solution has a corresponding linear
  system $L$ of rank $k$.
\end{conjecture}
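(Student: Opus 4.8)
I would prove the contrapositive. Write the system $L$ of a decomposition $(P,\pweights)$, $P = p_1,\dots,p_k$, as $\mathbf{A}\pweights = \mathbf{b}$, where the $j$th column of the binary matrix $\mathbf{A}$ is the arc-indicator $\mathbf{1}_{p_j}$; then ``$L$ has rank $k$'' is precisely the assertion that $\mathbf{A}$ has full column rank, equivalently that $\pweights$ is the \emph{unique} real solution of $L$. So suppose that some integer-weighted solution $(P,\pweights)$ with $|P| = k$ has $\operatorname{rank}\mathbf{A} < k$; the goal is to produce a flow decomposition of $(G,f)$ with fewer than $k$ paths, contradicting the minimality of $k$.

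First I would extract a nonzero $\mathbf{z}$ with $\mathbf{A}\mathbf{z} = 0$, i.e.\ $\sum_{j} z_j \mathbf{1}_{p_j} = 0$. Since each $\mathbf{1}_{p_j}$ is a nonzero nonnegative vector (every \stpath uses at least one arc), no nontrivial nonnegative combination of them can vanish, so both $\mathbf{z}$ and $-\mathbf{z}$ have a negative entry; hence $\mathbf{z}$ has entries of both signs. Now slide along $\pweights(t) = \pweights + t\mathbf{z}$: since $\mathbf{A}\pweights(t) = \mathbf{b}$ for every $t$, each $\pweights(t)$ still covers $f$ on all arcs. As $\pweights > 0$ and $\mathbf{z}$ has a negative coordinate, $t_0 := \min\{\, w_j / (-z_j) : z_j < 0 \,\}$ is finite and positive, $\pweights(t_0) \ge 0$, and at least one coordinate of $\pweights(t_0)$ vanishes. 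Deleting the paths of weight $0$ leaves $\le k-1$ \stpaths with positive weights that still sum to $f$ on every arc --- a flow decomposition of size $< k$.

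The catch is that $\pweights(t_0)$ need not be integral, so this argument only contradicts the minimality of $k$ \emph{over fractional decompositions}; since the minimum number of paths in an integral decomposition can genuinely exceed that of a fractional one, bridging to an \emph{integral} decomposition of size $< k$ is the crux, and I expect it to be the entire difficulty (consistent with the statement being left open). One route is to show that the admissible-weight polytope $\{\, x \ge 0 : \mathbf{A}' x = \mathbf{b} \,\}$ of any subfamily $P' \subseteq P$ has integral vertices: a vertex has support of size at most $\operatorname{rank}\mathbf{A}' \le \operatorname{rank}\mathbf{A} < k$, so an integral vertex is exactly the decomposition we want. Integrality of those vertices would follow from total unimodularity of the path--arc incidence matrix, which I would try to establish by an uncrossing argument --- repeatedly swapping the two suffixes of a pair of \stpaths at a shared internal vertex, an operation that preserves both the multiset of covered arcs and integrality --- after which one hopes to order the resulting paths so that, for each arc, the paths through it occupy a consecutive block of indices, making $\mathbf{A}$ an interval matrix. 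A second route is to prove directly that a rank-deficient path--arc incidence matrix always admits a $\{0,\pm1\}$-valued kernel vector $\mathbf{z}$; then $t_0 = \min\{\, w_j : z_j = -1 \,\}$ is an integer, $\pweights(t_0)$ is integral, and the previous paragraph closes immediately. The delicate point common to both routes is that uncrossing replaces the given path set, so one must either argue that a single uncross preserves $\operatorname{rank}\mathbf{A}$ (plausible, since $\mathbf{1}_{p_1} + \mathbf{1}_{p_2} = \mathbf{1}_{p_1'} + \mathbf{1}_{p_2'}$) or reduce the conjecture for an arbitrary minimum decomposition to the non-crossing case, and it is unclear whether the ``consecutive block'' structure is available for non-planar DAGs; I regard this total-unimodularity-flavored structural claim as the main obstacle.
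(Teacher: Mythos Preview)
The paper does not prove this statement; it is presented as an open conjecture. The authors remark that Vatinlen et al.\ proved the real-weighted analogue and explicitly note that their argument does not extend to integer weights. Your sliding argument in the first paragraph is precisely that real-valued result, and you correctly identify the real-versus-integer gap as the whole difficulty---so there is no proof in the paper to compare against, and your diagnosis matches the authors'.

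On your Route~1, minimality alone does not give total unimodularity. On the \stdag with vertices $s,a,b,c,d,e,f,t$ and arcs $sa,sc,ab,bc,be,cd,de,dt,ef,ft$, the paths $p_1=sabeft$, $p_2=sabcdt$, $p_3=scdeft$ form a minimum decomposition (the \TOcut $\{bc,be,sc\}$ has size~$3$), yet the $3\times 3$ minor on rows $sa,cd,ef$ has determinant~$2$; and since each of the three pairs of paths shares an arc, no ordering of the paths makes $\mathbf{A}$ an interval matrix. Uncrossing is also more delicate than your parenthetical suggests: swapping suffixes preserves $\mathbf{1}_{p_1}+\mathbf{1}_{p_2}$ but not the column space, so after uncrossing the system $\mathbf{A}'\mathbf{x}=\mathbf{b}$ can become infeasible---in the example above, uncrossing $p_2,p_3$ at~$c$ yields a path set that decomposes the original flow only when $w_2=w_3$. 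Thus neither total unimodularity nor feasibility is guaranteed through uncrossing, and closing the gap would require a genuinely new structural ingredient. Your Route~2 (a $\{0,\pm1\}$ kernel vector) is not ruled out by this example, since here $\mathbf{A}$ already has full rank, but establishing it in general appears to be of the same order of difficulty as the conjecture itself.
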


\noindent
A direct consequence of this conjecture would be that the running time in
Theorem~\ref{thm:runtime} immediately improves to~$\frac{1}{k!}\cdot 4^{k^2}
0.649^k k^{O(1)} \cdot n$ since the ILP-solving step would never occur. In
this context, we note that Vatinlen et al.~\cite{greedy} proved that when $k$
is minimum, every solution with real-valued weights has a corresponding linear
system of full rank. However, their proof does not hold when the path weights
are restricted to the integers.

\section{Conclusion}
\label{sec:conclusion}

We presented a holistic treatment of \FD from the perspectives of parameterized complexity
and algorithm engineering, resulting in a competitive solver, \toboggan.
Our approach verifies that parameterized algorithms can (with sufficient engineering)
be applied to real-world problems even in high-throughput situations.
Our work also naturally leads to several practical and theoretical questions for further investigation.

On the practical side, we would like to understand the cases in which a minimal
flow decomposition does not match the assembly problem's ground truth, and how
we might improve the recovery rate. The similarity in performance of \toboggan and
\catfish in our experiments suggests that we need to refine either the problem formulation
or our notion of minimality; in either case, more domain-specific knowledge is
needed.

On the theory side, we ask whether there exists an fpt algorithm for \FD with
running time~$k^{O(k)} n$ or better. In particular, it will be interesting to
see whether the established techniques used to improve dynamic programming
algorithms~\cite{SingleExpTWDet,CutAndCount} are applicable to our algorithm.
Furthermore, if Conjecture~\ref{conjecture} holds, it immediately implies
a tighter upper bound on the running time of our algorithm,
and might lead to further optimizations.\looseness-1

\vspace*{0.05in}
\noindent\textbf{Acknowledgements}.
{\small This work supported in part by the
Gordon \& Betty Moore Foundation's Data-Driven Discovery Initiative through
Grant GBMF4560 to Blair D.~Sullivan.}

\bibliographystyle{abbrv}
\bibliography{./biblio}
\vfill

\appendix
\section{Pseudocode for \kFDfull}\label{sec:pseudo}
\begin{algorithm}[H]
\label{alg:main}
\SetKwInOut{Input}{input}\SetKwInOut{Output}{output}

 \Input{An \stdag $G$, a flow $f$, and an integer $k$.}
 \Output{A vector $\pweights$ that contains the weights of the \stpaths for a flow decomposition into $k$ integral-weighted \stpaths. Or $\emptyset$ if none exist.}

  $order(G)$ \tcp*[r]{Order vertices via topological ordering}
  \tcp{Build $T_0$}
  \For{$i\in [k]$}{
    $g_0(i) := a_s$\;
  }
  $L_0 := \Big[\sum_{i=1}^{k} w_i = F\Big]$\;
  $T_0 := (g_0, L_0)$\;

  \tcp{Do iterative steps}
  \For{i=1 \KwTo n}{
    $T_i = \emptyset$\;
    \For{$(g,L) \in T_{i-1}$}{
      \ForAll{extensions $g'$ of $g$}{
        $L' := L$\;
        \For{$a \in A^+(v_i)$}{
          add equation $\Big[\sum_{i \in g'^{-1}(a)} w_i = f(a)\Big]$ to $L'$\;
        }
        $reduce(L')$ \tcp*[r]{Remove linearly dependent rows}
        \If{$L'$ is consistent}{
          $T_i = T_i \cup \{(g',L')\}$\;
        }

      }
    }
    \If{ $T_i = \emptyset$}{
    \Return{$\emptyset$}\;
  }
  }

  \tcp{Find $L$ in final table that has an integer solution}
  \For{$(g,L) \in T_n$}{
    \If{$L$ has an integer solution $\mathbf w$}{
      \Return{$\mathbf w$}\;
    }
  }
  \tcp{No solution was found}
  \Return{$\emptyset$}\;
 \vspace*{1em}
 \caption{Linear-fpt algorithm for deciding \kFDfull.}
\end{algorithm}

\newpage
\section{Experimental results organized by species}\label{sec:appendix-recovery}

Following the experimental setup of~\cite{Catfish}, in this section we report our results on each of the species data sets individually.
Figures~\ref{fig:appendix-recovery-by-species}~and~\ref{fig:appendix-failures-overlap} give this breakdown for the aggregated results shown in Figure~\ref{fig:solution-quality}.

\vspace{0.5in}
\begin{figure}[h]
    \centering
    \captionsetup[subfigure]{justification=centering}
    \adjustbox{valign=t}{
        \begin{subfigure}{0.32\textwidth}
                \resizebox{\textwidth}{!}{
                    \begin{tabular}{rrrr}
                        \toprule
                        $k$ & instances  &  \catfish &  \toboggan  \\
                        \midrule
                          2 &        76.6132\% &   0.992 &   \textbf{0.995} \\
                          3 &        17.3138\% &   0.962 &   \textbf{0.963} \\
                          4 &         4.3831\% &   \textbf{0.913} &   0.911 \\
                          5 &         1.1359\% &   0.855 &   \textbf{0.858} \\
                          6 &         0.3731\% &   \textbf{0.765} &   0.761 \\
                          7 &         0.1174\% &   \textbf{0.700} &   0.696 \\
                          8 &         0.0411\% &   0.710 &   \textbf{0.727} \\
                          9 &         0.0157\% &   \textbf{0.700} &   0.643 \\
                         10 &         0.0054\% &   \textbf{0.833} &   0.792 \\
                         11 &         0.0013\% &   0.500 &   0.500 \\
                        \midrule
                         All & 100\% &    0.980 &   \textbf{0.983} \\
                    \end{tabular}
                }%
            \caption{\zebra}
            \label{fig:appendix-recovery-zebra}
        \end{subfigure}%
    }%
    \adjustbox{valign=t}{
        \begin{subfigure}{0.32\textwidth}
                \resizebox{\textwidth}{!}{
                    \begin{tabular}{rrrr}
                        \toprule
                        $k$ & instances  &  \catfish &  \toboggan  \\
                        \midrule
                          2 &        59.4943\% &   0.992 &   \textbf{0.995} \\
                          3 &        23.4974\% &   0.966 &   \textbf{0.968} \\
                          4 &         9.6369\% &   \textbf{0.930} &   0.928 \\
                          5 &         4.1312\% &   0.880 &   \textbf{0.883} \\
                          6 &         1.8605\% &   \textbf{0.821} &   0.814 \\
                          7 &         0.8402\% &   0.776 &   \textbf{0.769} \\
                          8 &         0.3755\% &   \textbf{0.773} &   0.770 \\
                          9 &         0.1573\% &   \textbf{0.751} &   0.746 \\
                         10 &         0.0066\% &   0.677 &   \textbf{0.742} \\
                         & & & \\
                        \midrule
                         All & 100\% &    0.969 &   \textbf{0.971} \\
                    \end{tabular}
                }%
            \caption{\mouse}
            \label{fig:appendix-recovery-mouse}
        \end{subfigure}%
    }%
    \adjustbox{valign=t}{
        \begin{subfigure}{0.32\textwidth}
                \resizebox{\textwidth}{!}{
                    \begin{tabular}{rrrr}
                        \toprule
                        $k$ & instances  &  \catfish &  \toboggan  \\
                        \midrule
                          2 &        55.3832\% &   0.992 &   \textbf{0.996} \\
                          3 &        24.8378\% &   0.970 &   \textbf{0.973} \\
                          4 &        11.0312\% &   0.937 &   \textbf{0.939} \\
                          5 &         4.9135\% &   \textbf{0.897} &   0.894 \\
                          6 &         2.2336\% &   0.846 &   \textbf{0.847} \\
                          7 &         0.9848\% &   \textbf{0.805} &   0.798 \\
                          8 &         0.4212\% &   0.766 &   \textbf{0.767} \\
                          9 &         0.1860\% &   0.734 &   \textbf{0.748} \\
                         10 &         0.0087\% &   0.761 &   \textbf{0.848} \\
                         & & & \\
                        \midrule
                         All & 100\% &    0.969 &   \textbf{0.972} \\
                    \end{tabular}
                }%
            \caption{\human}
            \label{fig:appendix-recovery-human}
        \end{subfigure}%
    }%
    \caption{Proportion of ground truth pathsets of a given size that \catfish  and \toboggan recover exactly, organized by species. Bolded numbers indicate the algorithm with better performance. These results are reported in aggregate in Figure~\ref{fig:solution-quality}. \label{fig:appendix-recovery-by-species}}
\end{figure}

\begin{figure}[h]
    \centering
    \captionsetup[subfigure]{justification=centering}
    \begin{subfigure}{0.32\textwidth}
        \hspace*{-.6em}\includegraphics[scale=.31]{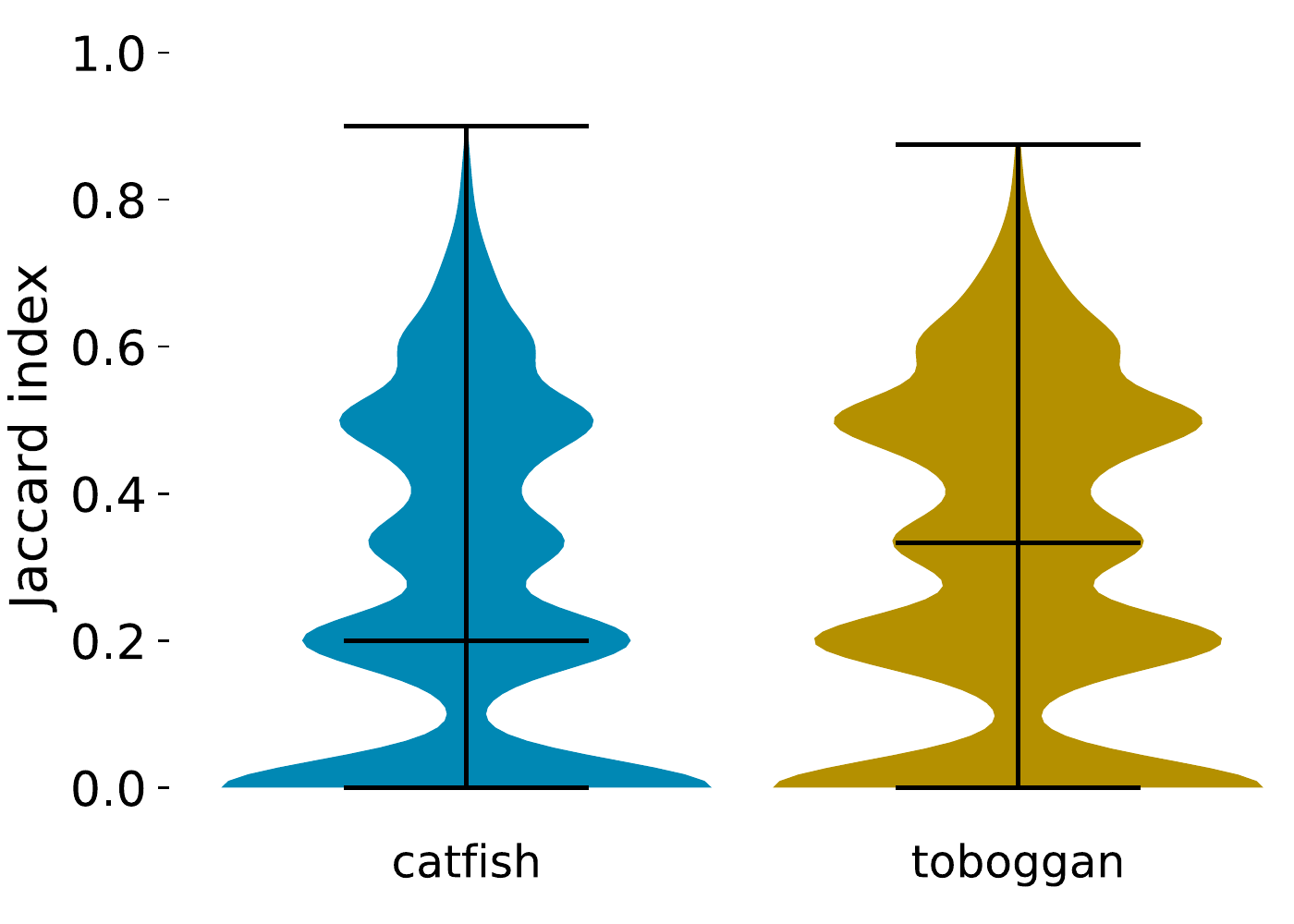}%
        \caption{\zebra}
    \end{subfigure}\hfil
    \begin{subfigure}{0.32\textwidth}
        \hspace*{.3em}\includegraphics[scale=.31]{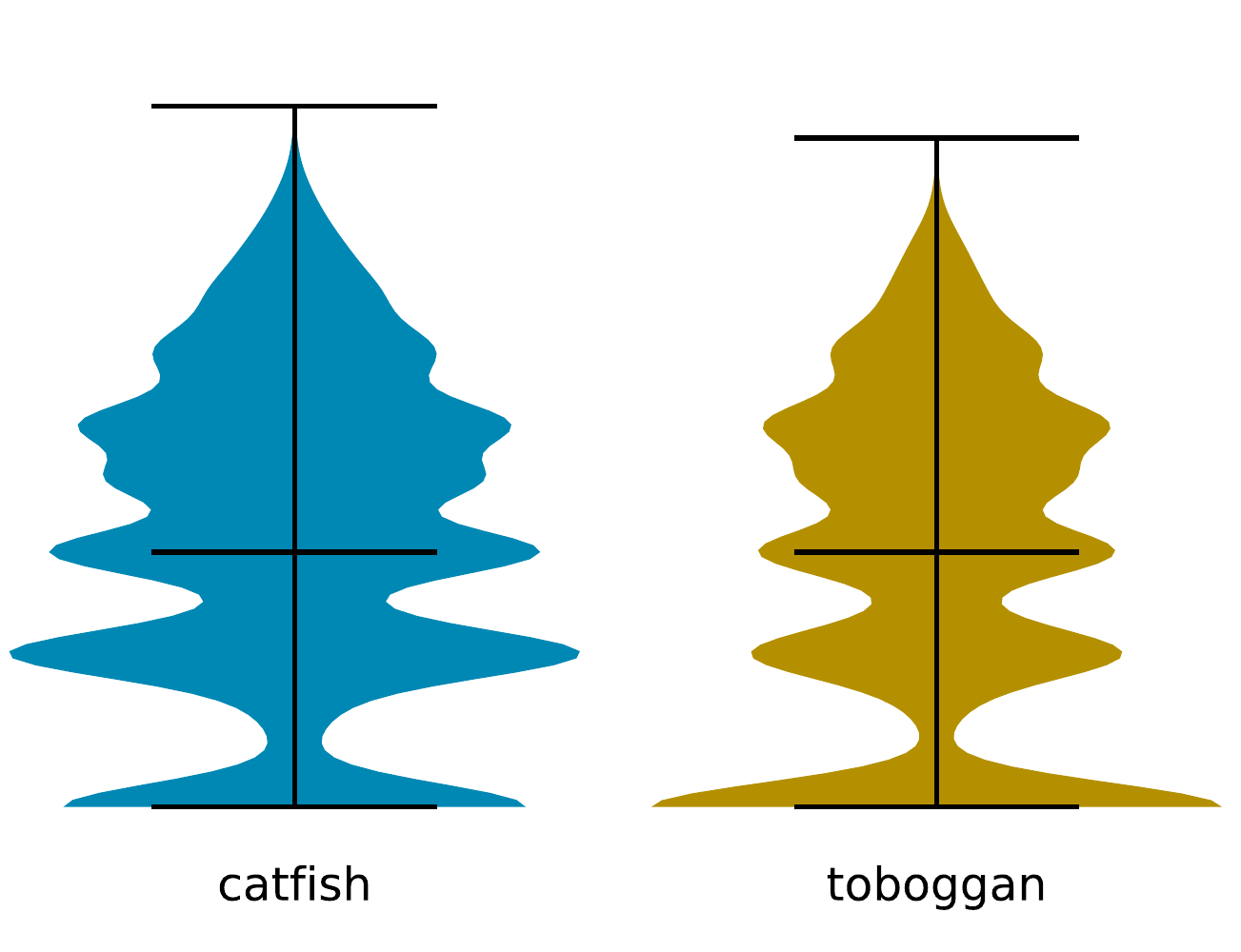}%
        \caption{\mouse}
    \end{subfigure}\hfil
    \begin{subfigure}{0.32\textwidth}
        \hspace{.5em}\includegraphics[scale=.31]{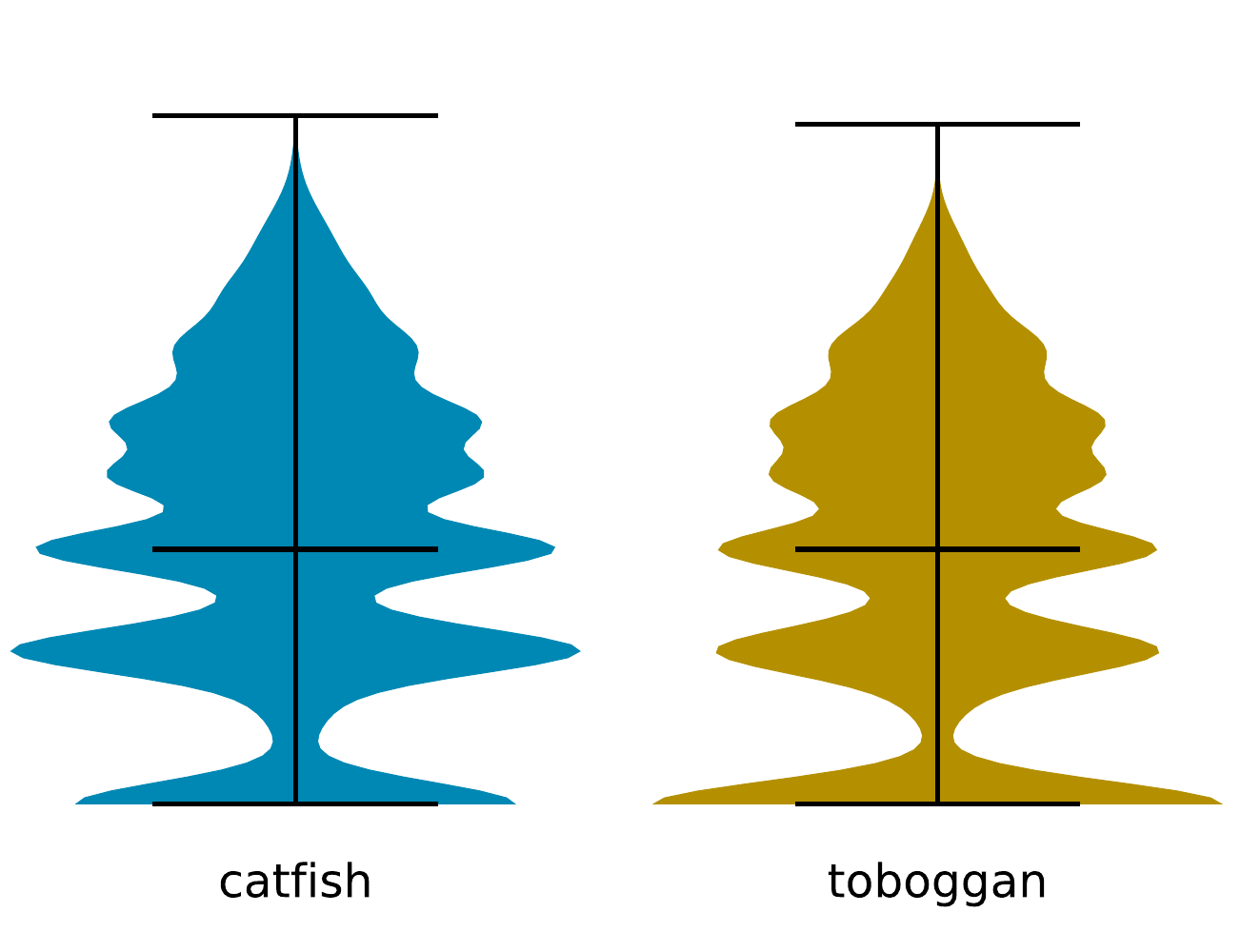}%
        \caption{\human}
    \end{subfigure}
    \caption{\label{fig:appendix-failures-overlap}
    Distribution of Jaccard indices between the algorithms' solutions and the ground truth. The bars in the middle indicate the median value; those at the top and bottom demarcate the extreme values. Instances for which the Jaccard index is 1 are omitted because those statistics are summarized in the tables in Figure~\ref{fig:appendix-recovery-by-species}. These results are reported in aggregate in Figure~\ref{fig:solution-quality}.
    }
\end{figure}

\def\redefineme{
    \def\LNCS{LNCS}%
    \def\ICALP##1{Proc. of ##1 ICALP}%
    \def\FOCS##1{Proc. of ##1 FOCS}%
    \def\COCOON##1{Proc. of ##1 COCOON}%
    \def\SODA##1{Proc. of ##1 SODA}%
    \def\SWAT##1{Proc. of ##1 SWAT}%
    \def\IWPEC##1{Proc. of ##1 IWPEC}%
    \def\IWOCA##1{Proc. of ##1 IWOCA}%
    \def\ISAAC##1{Proc. of ##1 ISAAC}%
    \def\STACS##1{Proc. of ##1 STACS}%
    \def\IWOCA##1{Proc. of ##1 IWOCA}%
    \def\ESA##1{Proc. of ##1 ESA}%
    \def\WG##1{Proc. of ##1 WG}%
    \def\LIPIcs##1{LIPIcs}%
    \def\LIPIcs{LIPIcs}%
    \def\LICS##1{Proc. of ##1 LICS}%
}

\end{document}